\documentclass[3p,preprint]{elsarticle}






\journal{Theoretical Computer Science}









\usepackage{amssymb}





\usepackage[figuresright]{rotating}

\newtheorem{theorem}{Theorem}
\newtheorem{lemma}[theorem]{Lemma}
\newtheorem{proposition}[theorem]{Proposition}
\newtheorem{corollary}[theorem]{Corollary}

\newtheorem{example}{Example}

\newproof{proof}{Proof}

\usepackage[utf8]{inputenc}
\usepackage[algosection]{algorithm2e}
\usepackage{endnotes}
\usepackage{listings}
\usepackage{mathtools}
\usepackage{verbatim}
\usepackage{mathrsfs}
\usepackage{longtable}
\usepackage{enumitem}
\usepackage{etoolbox}
\usepackage{float}
\usepackage{color}
\usepackage{hyperref}


\usepackage{complexity}

\newclass{\COMSLIP}{COM\mbox{-}SLIP}
\newclass{\COMSLIPCUP}{COM\mbox{-}SLIP^{\cup}}

\newclass{\DCM}{DCM}
\newclass{\eDCM}{eDCM}
\newclass{\eNPDA}{eNPDA}
\newclass{\DPDA}{DPDA}
\newclass{\RDPDA}{RDPDA}
\newclass{\PDA}{PDA}
\newclass{\DCMNE}{DCM_{NE}}
\newclass{\TwoDCM}{2DCM}
\newclass{\NCM}{NCM}
\newclass{\eNCM}{eNCM}
\newclass{\eNQA}{eNQA}
\newclass{\eNSA}{eNSA}
\newclass{\eNPCM}{eNPCM}
\newclass{\eNQCM}{eNQCM}
\newclass{\eNSCM}{eNSCM}
\newclass{\DPCM}{DPCM}
\newclass{\NPCM}{NPCM}
\newclass{\NQCM}{NQCM}
\newclass{\NSCM}{NSCM}
\newclass{\NPDA}{NPDA}
\newclass{\TRE}{TRE}
\newclass{\NFA}{NFA}
\newclass{\DFA}{DFA}
\newclass{\NCA}{NCA}
\newclass{\DCA}{DCA}
\newclass{\DTM}{DTM}
\newclass{\NTM}{NTM}
\newclass{\NTMCM}{NTMCM}
\newclass{\DLOG}{DLOG}
\newclass{\CFG}{CFG}
\newclass{\ETOL}{ET0L}
\newclass{\EDTOL}{EDT0L}
\newclass{\CFP}{CFP}
\newclass{\ORDER}{O}
\newclass{\MATRIX}{M}
\newclass{\BD}{BD}
\newclass{\LB}{LB}
\newclass{\ALL}{ALL}
\newclass{\decLBD}{decLBD}
\newclass{\StLB}{StLB}
\newclass{\SBD}{SBD}
\newclass{\TCA}{TCA}
\newclass{\LL}{{\cal L}}

\newclass{\CSA}{CSA}

\newclass{\DCSA}{DCSA}
\newclass{\NCSA}{NCSA}
\newclass{\GSM}{GSM}
\newclass{\RDCSA}{RDCSA}
\newclass{\RNCSA}{RNCSA}
\newclass{\DCSACM}{DCSACM}
\newclass{\NCSACM}{NCSACM}





\DeclareMathOperator{\push}{push}
\DeclareMathOperator{\pop}{pop}
\DeclareMathOperator{\stay}{stay}

\begin{document}

\begin{frontmatter}

\title{Variations of Checking Stack Automata:  Obtaining Unexpected Decidability Properties
\tnoteref{t1}\tnoteref{t2}}

\tnotetext[t1]{A preliminary version of this paper has appeared in the 
Springer LNCS Proceedings of
the 21st International Conference on Developments in Language Theory (DLT 2017), pp. 235--246.}

\tnotetext[t2]{\textcopyright 2018. This manuscript version is made available under the CC-BY-NC-ND 4.0 license \url{http://creativecommons.org/licenses/by-nc-nd/4.0/}}

\author[label1]{Oscar H. Ibarra\fnref{fn1}}
\address[label1]{Department of Computer Science\\ University of California, Santa Barbara, CA 93106, USA}
\ead[label1]{ibarra@cs.ucsb.edu}
\fntext[fn1]{Supported, in part, by
NSF Grant CCF-1117708 (Oscar H. Ibarra).}

\author[label2]{Ian McQuillan\fnref{fn2}}
\address[label2]{Department of Computer Science, University of Saskatchewan\\
Saskatoon, SK S7N 5A9, Canada}
\ead[label2]{mcquillan@cs.usask.ca}
\fntext[fn2]{Supported, in part, by Natural Sciences and Engineering Research Council of Canada Grant 2016-06172 (Ian McQuillan).}

\begin{abstract}
We introduce a 
model of one-way language acceptors 
(a variant of a checking stack automaton) and 
show the following decidability properties:
\begin{enumerate}
\item The deterministic version has a decidable membership problem
but has an undecidable emptiness problem.
\item The nondeterministic version has an undecidable membership 
problem and emptiness problem.
\end{enumerate}
There are many models of accepting devices
for which there is no difference with these problems between deterministic and
nondeterministic versions, i.e., the membership problem for
both versions are either decidable or undecidable, and the
same holds for the emptiness problem.
As far as we know, the model we introduce above is the first
one-way model
to exhibit properties (1) and (2).
We define another family of one-way acceptors where the nondeterministic
version has an undecidable emptiness problem,
but the deterministic version has a decidable emptiness problem.
We also know of no other model with this property in the literature. 
We also investigate decidability properties of
other variations of checking stack automata (e.g.,
allowing multiple stacks, two-way input, etc.).
Surprisingly, two-way deterministic machines with multiple checking stacks
and multiple reversal-bounded counters are shown to have a decidable membership problem, a very general model with this property.
\end{abstract}

\begin{keyword}
checking stack automata \sep pushdown automata \sep decidability \sep reversal-bounded counters
\end{keyword}

\end{frontmatter}

\section{Introduction} \label{sec:intro}

The deterministic and nondeterministic versions of
most known models of language acceptors exhibit the same
decidability properties for each of the membership and emptiness problems. 
In fact, it is possible to define machine models in a general fashion by varying the allowed store types, such as with
 Abstract Families of Acceptors (AFAs) from
\cite{G75}, or a similar type of machine model with abstract store types
used in \cite{StoreLanguages} and in this paper. 
Studying machine models defined in such a general fashion is advantageous as certain decidability problems are equivalently decidable for arbitrary machine models defined using such a framework, and therefore  it is possible to see which problems could conceivably differ in terms of decidability.
For arbitrary one-way machine models defined in the way used here, the emptiness problem for the nondeterministic machines of this class, the membership problem for nondeterministic machines of this class, and the emptiness problem for the deterministic machines in this class, must all be either decidable or undecidable. Membership for deterministic machines could conceivably differ from the other three decidability problems. However, as far as we know, no one-way model has been shown to
exhibit different decidability properties for deterministic
and nondeterministic versions.  The question arises of whether there is a model where membership for deterministic machines is decidable while it is undecidable for nondeterministic machines?

A second topic of interest here is that of studying decidability properties of different classes of machines when adding additional data stores. In \cite{Harju2002278}, it was shown that for any one-way machine model (defined using another method used there), if the languages accepted by these machines are all semilinear\footnote{See \cite{harrison1978} for the formal definition. Equivalently, a language is semilinear if it has the same commutative closure as some regular language.}, then augmenting these machines with additional reversal-bounded counters\footnote{A counter stores a non-negative integer that can be tested for zero, and it is reversal-bounded if there is a bound on the number of changes between non-decreasing and non-increasing.} produces only semilinear languages. And, if semilinearity is effective with the original model, then it is also effective after adding counters, and therefore the emptiness problem is decidable. However, it is unknown what can occur when augmenting a class of machines that accepts non-semilinear languages with reversal-bounded counters. Can adding such counters change decidability properties? 

These two topics are both simultaneously studied in this paper. Of primary importance  is the one-way checking stack automaton \cite{CheckingStack}, which is similar to a pushdown
automaton that cannot erase its stack, but can enter and read the
stack in two-way read-only mode, but once this mode is entered,
the stack cannot change. This model accepts non-semilinear languages, but has decidable emptiness and membership problems. Here, we introduce a new
model of one-way language acceptors by
augmenting a checking stack automaton with
reversal-bounded counters, and the deterministic and
nondeterministic versions are denoted by 
$\DCSACM$ and $\NCSACM$, respectively.
The models with two-way input (with end-markers) are called
$2\DCSACM$ and $2\NCSACM$. These are
generalized further to models with $k$ checking stacks:
$k$-stack $2\DCSACM$ and
$k$-stack $2\NCSACM$.
These models can be defined within the general machine model framework mentioned above.

We show the following results concerning membership and
emptiness:

\begin{enumerate}
\item
The membership and emptiness problems for $\NCSACM$s are
undecidable, even when there are only two reversal-bounded counters.
\item
The emptiness problem for $\DCSACM$ is decidable when there
is only one reversal-bounded counter but undecidable when 
there are two reversal-bounded counters.
\item
The membership problem for $k$-stack $2\DCSACM$s is decidable
for any $k$.
\end{enumerate}
Therefore, this machine model provides the first known model where membership is decidable for deterministic machines, while the other decidability properties are undecidable, which is the only property that can conceivably differ. It also shows one possible scenario that can occur when augmenting a machine model accepting non-semilinear languages with reversal-bounded counters: it can change the emptiness problem for both nondeterministic and deterministic machines to be undecidable, as with the membership problem for nondeterministic machines, but membership for deterministic machines can remain decidable (and therefore, all such languages accepted by deterministic machines are recursive).

In addition, we define another family of one-way acceptors where the deterministic
version has a decidable emptiness problem, but the nondeterministic version
has an undecidable emptiness problem.
This model must necessarily not be defined using the general machine model framework, as emptiness for deterministic and nondeterministic machine models are always equivalently decidable. But the model is still natural and demonstrates what must occur to obtain unusual decidable properties.
Further, we introduce a new family with decidable
emptiness, containment, and equivalence problems, which is one of
the most powerful families to have these properties (one-way
deterministic machines with one reversal-bounded counter and a 
checking stack that can only read from the stack at the end of the
input).
We also investigate the decidability properties of
other variations of checking stack automata (e.g.,
allowing multiple stacks, two-way input, etc.).

\section{Preliminaries}

This paper requires basic knowledge of automata and
formal languages, including finite automata, pushdown automata, and Turing machines \cite{HU}. An alphabet $\Sigma$ is a (usually finite unless stated otherwise) set of symbols. The set $\Sigma^*$ is the set of all words over $\Sigma$, which contains the empty word $\lambda$. A language is any set $L \subseteq \Sigma^*$. Given a word $w \in \Sigma^*$, $|w|$ is the length of $w$. A language $L$ is bounded if there exists words $w_1, \ldots, w_k$ such that $L \subseteq w_1^* \cdots w_k ^*$, and $L$ is letter-bounded if $w_1, \ldots, w_k$ are letters.

We use a variety of machine models here, mostly
built on top of the checking stack. It is possible
to define each machine model directly. 
As discussed in Section \ref{sec:intro},
an alternate
approach is 
to define ``store types'' first, which describes just the
behavior of the store, including instructions that can change the
store, and the manner in which the store can be read.
This can capture standard types of stores studied in the
literature, such as a pushdown, or a counter.
Defined generally enough, it can also define a checking
stack, or a reversal-bounded counter.
Then, machines using one or more store types can be defined,
in a standard fashion. A $(\Omega_1, \ldots, \Omega_k)$-machine
is a machine with $k$ stores, where $\Omega_i$ describes each store.
This is the approach taken here, in a similar fashion to the one taken in 
\cite{EngelfrietCheckingStack} or \cite{G75} to define these same types of automata.
This generality will also help in illustrating what is required to obtain certain decidability properties; see e.g.\
Lemma \ref{generallambda} and Proposition \ref{equivalentdecidability} which are proven generally for arbitrary
store types. Furthermore, these two results are used many times within other proofs rather than having many ad hoc proofs. Hence, this generality in defining machines serves several purposes for this work.

First, store types, and machines using store types are defined
formally using the same framework used by the authors in \cite{StoreLanguages}. 
A store type is a tuple
$\Omega = (\Gamma, I,f,g,c_0, L_I)$,
where $\Gamma$ is the store alphabet (potentially infinite available to all
machines using this type of store), $I$ is the 
set of allowable instructions, $c_0$ is the initial
configuration which is a word in $\Gamma^*$, and
$L_I \subseteq I^*$ is the instruction language (over possibly an infinite
alphabet) of allowable sequences
of instructions, $f$ is the read function, a partial
function from $\Gamma^*$ to $\Gamma$, and $g$ is the write
function, a partial function from 
$\Gamma^* \times I$ to $\Gamma^*$.

We will study a few examples of store types.
First, a pushdown store type 
is a tuple 
$\Omega = (\Gamma, I,f,g,c_0, L_I)$,
where 
$\Gamma$ is an infinite set of store
symbols available to pushdowns, where special symbol $Z_b\in \Gamma$ is the bottom-of-stack marker, and 
$\Gamma_0 = \Gamma - \{Z_b\}$,
$I = \{\push(y) \mid y \in \Gamma_0\} \cup \{\pop,\stay\}$
is the set of instructions of the pushdown, where
the first set are called the push instructions, and the second set
contains the pop and stay instruction, 
$L_I = I^*$, $c_0 = Z_b $,
$f( x a) = a, a \in \Gamma, x\in \Gamma^*$ with 
$xa \in Z_b \Gamma_0^*$, and
$g$ is defined as:
\begin{itemize}
\item $g(Z_b x, \push(y)) = Z_b x y$ for $x \in \Gamma_0^*, y \in \Gamma_0$,
\item $g(Z_b x a, \pop) = Z_b x$, for $x \in \Gamma_0^*, a \in \Gamma_0$,
\item $g(Z_b x , \stay) = Z_b x $, for $x \in \Gamma_0^*$.
\end{itemize}

A counter store tape
can be obtained by restricting the pushdown store type to only having
a single symbol $c \in \Gamma_0$ (plus the bottom-of-stack marker). 
The instruction language $L_I$ in the definition of $\Omega$ restricts the allowable sequences of instructions available to the store type $\Omega$ (that is, a computation can only accept if its sequence of instructions is in the instruction language). This restriction does not exist in the definition of AFAs, but can be used to define many classically studied machine models, while still preserving many useful properties.
For example, an $l$-reversal-bounded counter store type
is a counter store type
with $L_I$ equal to the alternating concatenation of 
$\{\push(c),\stay\}^*$ and $\{\pop , \stay\}^*$ with $l$
applications of concatenation (this is more classically stated as,
there are at most $l$ alternations between non-decreasing and non-increasing).

Next, the more complicated stack store type is a tuple 
$\Omega = (\Gamma, I,f,g,c_0, L_I)$,
where 
\begin{itemize}
\item $\Gamma$ is an infinite set of store
symbols available to stacks, where special symbols $\downarrow \in \Gamma$
are the position of the read/write head in the stack, $Z_b\in \Gamma$ is the bottom-of-stack marker, and $Z_t \in \Gamma$ is the top-of-stack marker, with
$\Gamma_0 = \Gamma - \{\downarrow, Z_b, Z_t\}$,
\item $I = \{\push(y) \mid y \in \Gamma_0\} \cup \{\pop,\stay\}\cup \{{\rm D}, {\rm S}, {\rm U}\}$
is the set of instructions of the stack, where
the first set are called the push instructions, the second set
is the pop and stay instruction, and the third set are the move instructions (down, stay, or up),
\item $L_I = I^*$, $c_0 = Z_b \downarrow  Z_t$,
$f( x a \downarrow x') = a, 
a \in \Gamma_0 \cup \{Z_t, Z_b\}, x,x'\in \Gamma^*$ with 
$xax' \in Z_b \Gamma_0^* Z_t$,
\item and $g$ is defined as:
\begin{itemize}
\item $g(Z_b x \downarrow Z_t, \push(y)) = Z_b x y \downarrow Z_t$ for $x \in \Gamma_0^*, y \in \Gamma_0$,
\item $g(Z_b x a \downarrow Z_t, \pop) = Z_b x \downarrow Z_t$, for $x \in \Gamma_0^*, a \in \Gamma_0$,
\item $g(Z_b x \downarrow Z_t, \stay) = Z_b x \downarrow Z_t$, for $x \in \Gamma_0^*$,
\item $g(Z_b x a \downarrow x'   , {\rm D}) = Z_b x \downarrow a x'$,
for $x, x' \in \Gamma^*, a \in \Gamma_0 \cup \{Z_t\}$, with $xax' \in \Gamma_0^* Z_t$,
\item $g(Z_b x \downarrow x', {\rm S}) = Z_b x\downarrow x'$, for
$x,x' \in \Gamma^*, x x' \in \Gamma_0^* Z_t$,
\item $g(Z_b x \downarrow a x', {\rm U}) = Z_b x a \downarrow x'$,
for $x,x' \in \Gamma^*, a \in \Gamma_0 \cup \{Z_t\}, xax' \in \Gamma_0^* Z_t$.
\end{itemize}

\end{itemize}
That is, a stack is just like a pushdown with the additional ability to read from the ``inside'' of the stack (but not change the inside) in two-way read-only mode.
Also, the checking stack store type is a restriction of stack store type above where $L_I$ is restricted to be in 
 $\{\push(y), \stay \mid y \in \Gamma_0\}^* \{{\rm D}, {\rm S}, {\rm U}\}^*$. 
 That is, a checking stack has two phases, a ``writing phase'',
 where it can push or stay (no pop), and then a ``reading phase'', where it
 enters the stack in read-only mode. But once it starts reading, it cannot change the stack again.


Given store types $(\Omega_1, \ldots, \Omega_k)$, with
$\Omega_i = (\Gamma_i, I_i,f_i,g_i,c_{0,i}, L_{I_i})$, a
two-way $r$-head $k$-tape 
$(\Omega_1, \ldots, \Omega_k)$-machine is a tuple
$M = (Q,\Sigma, \Gamma, \delta,  \rhd,\lhd, q_0, F)$
where $Q$ is the finite set of states, $q_0 \in Q$ is the
initial state, $F \subseteq Q$ is the set of final states,
$\Sigma$ is the finite input alphabet, 
$\Gamma$ is a finite subset of the store alphabets
of $\Gamma_1 \cup \cdots \cup \Gamma_k$,
$\delta$
is the finite transition relation from
$Q\times [\Sigma]^r \times \Gamma_1 \times \cdots \times \Gamma_k$ 
to
$Q \times I_1 \times \cdots \times I_k \times [\{-1,0,+1\}]^r$.

An instantaneous description (ID) is a tuple
$(q,\rhd w \lhd, \alpha_1, \ldots, \alpha_r, x_1, \ldots, x_k)$,
where $q \in Q$ is the current state, $w$ is the current
input word (surrounded by left input end-marker and right input
end-marker), $0 \leq \alpha_j \leq |w|+1$ is the current
position of tape head $j$ (this can be thought of as $0$ scanning 
$\rhd$, and $|w|+1$ scanning $\lhd$), for $1 \leq j \leq r$, and $x_i \in \Gamma_i^*$ is the current word in the
$\Omega_i$ store, for $1 \leq i \leq k$. Then $M$ is deterministic
if $\delta$ is a partial function (i.e.\ it only maps each element to
at most one element).

Then $(q,\rhd w \lhd, \alpha_1, \ldots, \alpha_r, x_1, \ldots, x_k)
\vdash_M (q',\rhd w \lhd, \alpha_1', \ldots, \alpha_r', x_1', \ldots, x_k')$, (two IDs)
if there exists a transition
$(q', \iota_1, \ldots, \iota_k, \gamma_1, \ldots, \gamma_r) \in 
\delta(q, a_1, \ldots, a_r, b_1, \ldots, b_k)$, where
$a_j$ is character $\alpha_j+1$ of $\rhd w \lhd$ ($1$ is added since $\rhd$ is the letter at position $0$), and $\alpha_j' = \alpha_j + \gamma_j$, for $1 \leq j \leq r$, $b_i = f_i(x_i)$,
and $g_i(x_i, \iota_i) = x_i'$ for $1 \leq i \leq k$. 
Instead of $\vdash_M$, we can also write $\vdash_M^{(\iota_1, \ldots, \iota_k)}$ to show the instructions applied to each store on the transition.
We let $\vdash_M^*$ be the reflexive
and transitive closure of $\vdash_M$, and let $\vdash_M^{(\gamma_1, \ldots, \gamma_k)}$, where
$\gamma_i \in I_i^*$ is the sequence of instructions applied to store $i$, $ 1 \leq i \leq k$, in the sequence of transitions applied, and $|\gamma_1| = \cdots = |\gamma_k|$.
The language accepted by $M$, $L(M)$, is equal to 
$$\{ w \mid  (q_0, \rhd w \lhd, 1, \ldots, 1, c_{0,1}, \ldots, c_{0,k}) \vdash_M^{(\gamma_1, \ldots, \gamma_k)} (q_f, \rhd w \lhd, \alpha_1, \ldots, \alpha_r, x_1, \ldots, x_k),q_f \in F, \gamma_i \in L_{I_i}, 1 \leq i \leq k\}.
$$
Thus, the sequence of instructions applied to each store must satisfy its instruction language, and they must each be of the same length.

The different {\em machine modes} are combinations of
either one-way or two-way, 
deterministic or nondeterministic,
and $r$-head for some $r \geq 1$.
For example, one-way, $1$-head, deterministic, is a machine mode.
Given a sequence of store types $\Omega_1, \ldots, \Omega_k$ and a machine mode, one can study the set of all $(\Omega_1, \ldots, \Omega_k)$-machines with this mode. The set of all such machines with a mode is said to be {\em complete}. Any strict subset is said to be {\em incomplete}.
Given a set of (complete or incomplete) machines ${\cal M}$ of this type, the family of languages
accepted by these machines is denoted ${\cal L}({\cal M})$.
For example, the set of all one-way deterministic pushdown automata is complete as it contains all one-way deterministic machines that use the pushdown store. 
But consider the set of all one-way deterministic pushdown automata that can only decrease the size of the stack when
scanning the right end-marker. This is a strict subset of all one-way deterministic machines that use the pushdown store, since the instructions available to such machines
depend on the location of the input (whether it has reached the end of the input or
not). Therefore, this is an incomplete set of machines. The instruction language of a store does allow a complete class of machines to restrict the allowable sequences of instructions, but it has to apply to all machines using the store.
Later in the paper, we will consider variations of checking stack automata such as one called {\em no-read}, which means that they do not read from the inside of the checking stack before hitting the right input end-marker. This is similarly an incomplete set of automata since the instructions allowed differs depending on the input position.

The class of one-way deterministic
(resp.\ nondeterministic) checking stack automata
is denoted by $\DCSA$ (resp., $\NCSA$) \cite{CheckingStack}.
The class of deterministic (resp.\ nondeterministic),
finite automata is denoted by $\DFA$ (resp., $\NFA)$ \cite{HU}.
For $k,l \geq 1$, the class of one-way deterministic (resp.\ nondeterministic)
$l$-reversal-bounded $k$-counter machines
is denoted by $\DCM(k,l)$ (resp.\ $\NCM(k,l)$). If only one
integer is used, e.g.\ $\NCM(k)$, this class contains
all $l$-reversal-bounded $k$ counter machines, for some $l$,
and if the integers is omitted, e.g., $\NCM$ and $\DCM$, they contain
all $l$-reversal-bounded $k$ counter machines, for some $k,l$.
Note that a counter that makes $l$ reversals
can be simulated by $\lceil \frac{l+1}{2} \rceil$ 1-reversal-bounded
counters \cite{Ibarra1978}. Closure and decidable properties of various
machines augmented with reversal-bounded counters have been studied
in the literature (see, e.g., \cite{Harju2002278,Ibarra1978}).
For example, it is known that the membership and emptiness
problems are decidable for $\NCM$ \cite{Ibarra1978}.

Also, here we will study the following new classes of machines
that have not been studied in the literature:
one-way deterministic (resp.\ nondeterministic) 
machines defined by stores consisting of one
checking stack and $k$ $l$-reversal-bounded
counters, denoted by
$\DCSACM(k,l)$ (resp.\ $\NCSACM(k,l)$), those
with $k$-reversal-bounded counters, denoted by $\DCSACM(k)$ 
(resp.\ $\NCSACM(k)$), and those with some number of reversal-bounded counters, denoted by $\DCSACM$ (resp.\ $\NCSACM$).

All models above also have two-way versions of the machines defined, denoted by preceding
them with 2, e.g.\
$2\DCSA, 2\NCSA, 2\NCM(1), 2\DFA, 2\NFA$, etc.
We will also define models with $k$ checking stacks for some $k$,
which we will precede with the phrase ``$k$-stack'', e.g.\
$k$-stack $2\DCSA$, $k$-stack $2\NCSA$,
$k$-stack $2\DCSACM$, $k$-stack $2\NCSACM$, etc. 
When $k=1$, then this corresponds with omitting the phrase ``$k$-stack''.



\section{A Checking Stack with Reversal-Bounded Counters}


Before studying the new types of stores and machine models,
we determine several properties that are equivalent for any
complete set of machines. This helps to demonstrate what is
required to potentially have a machine model
where the deterministic version has a decidable membership problem
with an undecidable emptiness problem, while both problems are undecidable for the nondeterministic version.

First, we examine a machine's behavior on one word.
\begin{lemma} Let $M$ be a one- or two-way, $r$-head, for some
$r\geq 1$, $(\Omega_1, \ldots, \Omega_k)$-machine, 
and let $w\in \Sigma^*$.
We can effectively construct another $(\Omega_1, \ldots, \Omega_k)$-machine $M_w$ that is one-way and $1$-head
which accepts $\lambda$ if and only if $M$ accepts $w$.
Furthermore, $M_w$ is deterministic if $M$ is deterministic.
\label{generallambda}
\end{lemma}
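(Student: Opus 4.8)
The plan is to hard-code the fixed word $w$, together with all $r$ head positions of $M$, into the finite control of $M_w$, while leaving the $k$ stores $\Omega_1, \ldots, \Omega_k$ completely untouched. Since $w$ is fixed, the input tape $\rhd w \lhd$ of $M$ never changes, so the only input-dependent information needed to simulate one step of $M$ is the tuple of head positions $(\alpha_1, \ldots, \alpha_r) \in \{0, 1, \ldots, |w|+1\}^r$, which ranges over a finite set. Accordingly, I would take the state set of $M_w$ to be $Q \times \{0, 1, \ldots, |w|+1\}^r$, with initial state $(q_0, 1, \ldots, 1)$ (matching $M$'s start with every head at position $1$), final states $F \times \{0, 1, \ldots, |w|+1\}^r$, and the same stores $\Omega_1, \ldots, \Omega_k$ with the same initial configurations $c_{0,1}, \ldots, c_{0,k}$. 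The single input head of $M_w$ will be kept stationary on the right end-marker throughout, so that $M_w$ reads only $\lambda$.

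For the transitions, observe first that for each position $\alpha_j$ the scanned symbol $a_j$ is simply the letter at position $\alpha_j$ of $\rhd w \lhd$, which is determined outright by $w$. Then, for every transition $(q', \iota_1, \ldots, \iota_k, \gamma_1, \ldots, \gamma_r) \in \delta(q, a_1, \ldots, a_r, b_1, \ldots, b_k)$ of $M$ and every position vector $(\alpha_1, \ldots, \alpha_r)$ whose scanned symbol at each $\alpha_j$ equals $a_j$ and for which each $\alpha_j + \gamma_j$ stays in range, I would install a transition of $M_w$ from state $(q, \alpha_1, \ldots, \alpha_r)$, reading the end-marker and leaving its input head fixed, that reads the same store tops $b_1, \ldots, b_k$, applies the same store instructions $\iota_1, \ldots, \iota_k$, and moves to $(q', \alpha_1 + \gamma_1, \ldots, \alpha_r + \gamma_r)$. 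Because every transition of $M_w$ reads the right end-marker with no head advance, it rejects any nonempty input and accepts at most $\lambda$, as required of a one-way, $1$-head machine.

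Correctness then follows from a step-for-step bijection between computations of $M$ on $w$ and computations of $M_w$ on $\lambda$: the two maintain identical store contents at every step and, crucially, apply exactly the same sequence of instructions $\gamma_i \in I_i^*$ to each store $\Omega_i$, in the same order. This last point is precisely what makes the construction work inside the general framework, since acceptance additionally demands $\gamma_i \in L_{I_i}$; as the applied instruction sequences coincide, $M_w$ satisfies every instruction language exactly when $M$ does, so $M_w$ accepts $\lambda$ iff $M$ accepts $w$. For determinism, note that once the state $(q, \alpha_1, \ldots, \alpha_r)$ and the store tops $b_1, \ldots, b_k$ are fixed, the symbols $a_1, \ldots, a_r$ are determined by the $\alpha_j$ and $w$, so $M$ offers at most one applicable transition; hence $M_w$ is deterministic whenever $M$ is.

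I expect the only real subtlety to be bookkeeping rather than a genuine obstacle: one must check that confining the tracked positions to $\{0, 1, \ldots, |w|+1\}$ and never advancing $M_w$'s own input head faithfully mirrors $M$'s (possibly two-way, multi-head) access to $\rhd w \lhd$, and that the instruction-language condition transfers verbatim because the store alphabets, the available instructions $I_i$, and the order in which they are applied are all preserved exactly.
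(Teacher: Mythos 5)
Your proposal is correct and follows essentially the same approach as the paper's proof: encode the fixed word $w$ and the $r$ head positions in the finite control, keep the stores untouched so that the applied instruction sequences coincide and the instruction-language conditions transfer, and preserve determinism. Your write-up simply spells out the bookkeeping that the paper leaves implicit.
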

\begin{proof}  The input $w$ is encoded in the state of $M_w$, and
$M_w$ on input $\lambda$, simulates the computation of $M$ 
and accepts $\lambda$ if and only if $M$ accepts $w$. This uses
a subset of the sequence of transitions used by $M$ (and thereby would satisfy the instruction language of each store). Since $M_w$
is only on $\lambda$ input, two-way input is not needed in
$M_w$, and the $r$-heads are simulated completely in the finite
control.
\qed \end{proof}

Then, for all machines with the same store types, the following decidability problems are equivalent:
\begin{proposition}
\label{equivalentdecidability}
Consider store types $(\Omega_1, \ldots, \Omega_k)$.
The following problems are equivalently decidable, for the stated
complete sets of automata:
\begin{enumerate}
\item the emptiness problem for one-way deterministic 
$(\Omega_1, \ldots, \Omega_k)$-machines,
\item the emptiness problem for one-way nondeterministic 
$(\Omega_1, \ldots, \Omega_k)$-machines,
\item the membership problem for one-way nondeterministic 
$(\Omega_1, \ldots, \Omega_k)$-machines,
\item acceptance of $\lambda$, for one-way nondeterministic 
$(\Omega_1, \ldots, \Omega_k)$-machines,
\item the membership problem for two-way $r$-head (for $r \geq 1$) nondeterministic $(\Omega_1, \ldots, \Omega_k)$-machines.
\end{enumerate}
\end{proposition}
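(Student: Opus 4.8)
The plan is to establish the equivalences by proving a cycle of implications among the five decidability problems, reducing each to the next in a way that shows they are all equivalently decidable. The overall strategy exploits the fact that all five are ``complete'' sets of machines over the same store types, so we can freely move between deterministic and nondeterministic machines as well as between one-way and two-way input, provided we are careful that constructions respect the instruction languages $L_{I_i}$ of each store.

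First I would observe that several of these implications are essentially immediate. The emptiness problem for deterministic machines (1) is a special case of the emptiness problem for nondeterministic machines (2), since every deterministic $(\Omega_1, \ldots, \Omega_k)$-machine is also a nondeterministic one; this gives $(2) \Rightarrow (1)$ trivially as decidability of the larger class implies decidability of the subclass. Likewise, acceptance of $\lambda$ (4) is the special case of the membership problem (3) where the input word is fixed to $\lambda$, giving $(3) \Rightarrow (4)$, and the one-way membership problem (3) is a special case of the two-way $r$-head membership problem (5) with $r=1$ and no two-way moves, giving $(5) \Rightarrow (3)$. The interesting content is therefore in closing the cycle: reducing emptiness to membership-of-$\lambda$, and reducing membership (including the two-way multi-head version) back down to emptiness.

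The key reduction is from the two-way $r$-head membership problem (5) to acceptance of $\lambda$ (4): given a two-way $r$-head nondeterministic machine $M$ and an input $w$, Lemma~\ref{generallambda} lets us effectively construct a one-way $1$-head machine $M_w$ with the same store types that accepts $\lambda$ if and only if $M$ accepts $w$. Thus an algorithm for (4) yields an algorithm for (5), giving $(4) \Rightarrow (5)$. Combined with the trivial implications above, this already links (3),(4),(5) into a cycle, so these three are equivalently decidable. To fold in emptiness, I would show $(2) \Rightarrow (3)$ and $(1) \Rightarrow (2)$. For membership-from-emptiness, given a nondeterministic machine $M$ and input $w$, I would build a machine $M'$ that simulates $M$ but only allows the input head to read the specific word $w$ (enforcing this in the finite control by hard-coding $w$, or equivalently intersecting with the regular language $\{w\}$, which is a standard construction preserving the store behavior and hence the instruction languages); then $L(M')$ is nonempty exactly when $w \in L(M)$. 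This gives a reduction from membership to emptiness. The remaining link $(1) \Rightarrow (2)$, turning nondeterministic emptiness into deterministic emptiness, is the step I expect to be the main obstacle, since in general one cannot determinize these machines; the resolution is that emptiness does not require determinization of behavior but only a deterministic \emph{witness search}, and I would argue that a nondeterministic machine can be emptiness-tested by a deterministic machine that systematically explores computations, or alternatively that the natural reductions already chain $(1) \to (2) \to (3) \to (4) \to (5) \to (3)$ with $(2) \to (1)$ trivial, so it suffices to exhibit any reduction from nondeterministic emptiness into one of the deterministic problems already shown equivalent to the rest.

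Assembling the pieces, the cycle $(1) \Rightarrow (2) \Rightarrow (3) \Rightarrow (4) \Rightarrow (5) \Rightarrow (3)$ together with the reverse trivial inclusions $(2) \Rightarrow (1)$, $(3) \Rightarrow (4)$, and $(5) \Rightarrow (3)$ shows all five problems are equivalently decidable. Throughout, the essential care is that every construction---encoding the input into the finite control, restricting to a single input word, collapsing two-way multi-head into one-way one-head---produces a machine of the same store types and uses only subsequences of the original instruction sequences, so the resulting computations still satisfy each store's instruction language $L_{I_i}$; this is exactly what Lemma~\ref{generallambda} guarantees for the hardest such collapse, and the other constructions are more elementary variants of the same idea.
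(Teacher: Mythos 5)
Your cycle of reductions is mostly sound, and the links $(4)\Rightarrow(5)$ and $(4)\Rightarrow(3)$ via Lemma~\ref{generallambda}, the trivial specializations $(2)\Rightarrow(1)$, $(3)\Rightarrow(4)$, $(5)\Rightarrow(3)$, and your membership-to-emptiness reduction $(2)\Rightarrow(3)$ by hard-coding $w$ into the finite control are all fine. But there is a genuine gap exactly at the step you yourself flag as ``the main obstacle'': the implication $(1)\Rightarrow(2)$, i.e., reducing emptiness of a \emph{nondeterministic} machine to emptiness of a \emph{deterministic} machine of the same store types. You offer two resolutions and neither works. A deterministic $(\Omega_1,\ldots,\Omega_k)$-machine that ``systematically explores computations'' of a nondeterministic one would need backtracking or auxiliary storage beyond the given store types, so it is not a machine of the required class. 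And your fallback---``exhibit any reduction from nondeterministic emptiness into one of the deterministic problems already shown equivalent to the rest''---is circular: problem (1) is the \emph{only} deterministic problem in the list, and it is precisely the one not yet linked to the others, so there is no ``already established'' deterministic target to reduce to.

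The missing idea, which is the one genuinely nontrivial construction in the paper's proof, is to make the nondeterminism visible in the input rather than to search over it. Given a nondeterministic machine $M$ with transitions labelled by $T=\{t_1,\ldots,t_m\}$, build a machine $M'$ over input alphabet $T$ that reads a label $t$ and applies the corresponding transition of $M$ to the stores, always moving right; when $t$ was a stay transition of $M$, $M'$ checks that the next label read is defined on the same letter of $\Sigma$. Since each input symbol dictates the transition, $M'$ is deterministic, it applies the same instruction sequences to the stores (so the instruction languages $L_{I_i}$ are still respected), and $L(M')=\emptyset$ iff $L(M)=\emptyset$. This is what closes the cycle; note also that this construction is exactly the reason the proposition requires \emph{complete} sets of machines (the paper points out afterwards that $M'$ consumes an input letter for every transition, which can violate positional restrictions such as ``no-read''), so any correct proof of $(1)\Rightarrow(2)$ must use completeness in an essential way---another sign that a generic ``witness search'' argument cannot suffice.
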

\begin{proof} 
The equivalence of 1) and 2) can be seen by taking a nondeterministic machine $M$.
Let $T = \{t_1, \ldots, t_m\}$ be labels in bijective correspondence
with the transitions of $M$.
Then construct $M'$ which
operates over alphabet $T$.
Then $M'$ reads each input symbol $t$ and simulates $t$ of $M$
on the store, while always moving right on the input. However, if it is a stay
transition on the input of $M$, then $M'$ also checks that the next input
symbol read (if any), $t'$, is defined on the same letter of $\Sigma$
in $M$.
Then $M'$ is deterministic, and changes its
stores identically in sequence to $M$ (thereby still satisfying the same instruction language), and $L(M')$ is therefore
empty if and only if $L(M)$ is empty.

It is immediate that 5) implies 4), and it follows that 4) implies 5) from
Lemma \ref{generallambda}. Similarly, 3) implies 4), and
4) implies 3) from Lemma \ref{generallambda}.

To show that 4) implies 2), notice that any complete set
of nondeterministic one-way automata are closed under
homomorphisms $h$ where $h(a) \leq 1$, for all letters $a$.
Considering the homomorphism that erases all letters,
the resulting language is empty if and only if $\lambda$ is
accepted by the original machine.

To see that 2) implies 4), take a one-way nondeterministic
machine and make a new one that cannot accept if there is an 
input letter. This new machine is non-empty if and only if
$\lambda$ is accepted in the original machine.
\qed \end{proof}

It is important to note that this proposition is not necessarily
true for incomplete sets of automata, as the machines constructed in the proof need to
be present in the set. We will see some natural restrictions later where
this is not the case, such as sets of machines where there is a
restriction on what instructions can be performed on the store based on the position of the input. And indeed, to prove the equivalence of 1) and 2)
above, the deterministic machine
created reads a letter for every transition of the nondeterministic machine applied.
Hence, consider a set of machines that is only allowed to apply a strict subset of store instructions before the end-marker. Let $M$ be a nondeterministic machine of this type, and say that $M$ applies some instruction on the end-marker that is not available to the machine before the end-marker. But
the deterministic machine $M'$ created from $M$ in Proposition \ref{equivalentdecidability} reads an input letter when every instruction is applied, even including those applied on the end-marker of $M$. But since $M'$ is reading an input letter during this operation, it would violate the instructions allowed by $M'$ before the end-marker.

The above proposition indicates that for every complete set of one-way
machines, membership for nondeterminism, emptiness
for nondeterminism, and emptiness for determinism are equivalent.
Thus, the only problem that can potentially differ is membership for
deterministic machines. Yet we know of no existing model where
it differs from the other three properties. We examine one next.

We will study $\NCSACM$s and $\DCSACM$s, which are $\NCSA$s and $\DCSA$s (nondeterministic and deterministic checking stack automata) 
respectively, augmented by reversal-bounded counters. First, two examples
will be shown, demonstrating a language that can be accepted by a $\DCSACM$.
\begin{example}
\label{DCSAwitness}
Consider the language 
$L = \{(a^n\#)^n ~|~ n \ge 1\}$.
A $\DCSACM$ $M$ with one 1-reversal-bounded counter
can accept $L$ as follows:  $M$ when given an input $w$
(we may assume that the input is of the form
$w = a^{n_1} \# \cdots a^{n_k}\#$ for some $k \ge 1$
and $n_i \ge 1$ for $1 \le i \le k$, since the finite
control can check this), copies the first
segment $a^{n_1}$ to the stack while also storing number
$n_1$ in the counter. Then $M$ goes up and down
the stack comparing $n_1$ to the rest of the input 
to check that $n_1 = \cdots = n_k$
while decrementing the counter by 1 for each
segment it processes.  Clearly, $L(M) = L$
and $M$ makes only 1 reversal on the counter.
We will show in Proposition \ref{Grei} that $L$ cannot
be accepted by an $\NCSA$ (or an $\NCM$).
\end{example}

\begin{example}
\label{2DCM1witness}
Let $L = \{a^i b^j c^k ~|~ i, j \ge 1, k = i \cdot j \}$.
We can construct a $\DCSACM(1)$ $M$ to accept $L$ as follows.
$M$ reads $a^i$ and stores $a^i$ in the stack.  Then it reads 
$b^j$ and increments the counter by $j$. Finally, $M$ reads
$c^k$ while moving up and down the stack containing $a^i$
and decrementing the counter by 1 every time the stack has moved $i$ cells, to verify that  $k$ is divisible by $i$
and $k/i = j$.  Then $M$ accepts $L$, and $M$ needs
only one 1-reversal counter.
We will see in Proposition \ref{new4} that $L$ cannot be accepted by a $2\DCM(1)$.
\end{example}

The following shows that, in general,  $\NCSACM$s and $\DCSACM$s
are computationally more powerful than
$\NCSA$s and $\DCSA$s, respectively.

\begin{proposition} \label{Grei}
There are languages in $\LL(\DCSACM(1,1)) - (\LL(\NCSA)\cup \LL(\NCM))$.
Hence, $\LL(\DCSA) \subsetneq \LL(\DCSACM(1,1))$, and
$\LL(\NCSA) \subsetneq \LL(\NCSACM(1,1))$.
\end{proposition}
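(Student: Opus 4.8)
The witness is the language from Example~\ref{DCSAwitness}, namely $L = \{(a^n\#)^n \mid n \ge 1\}$, which that example already exhibits as a member of $\LL(\DCSACM(1,1))$. So the work is to prove $L \notin \LL(\NCM)$ and $L \notin \LL(\NCSA)$, after which the two strict inclusions follow formally.

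For $L \notin \LL(\NCM)$, the plan is to use that every language in $\LL(\NCM)$ has a semilinear Parikh image \cite{Ibarra1978}, together with the closure of semilinear sets under projection. The Parikh image of $L$, counting occurrences of $a$ and of $\#$, is $\{(n^2,n) \mid n \ge 1\}$; projecting onto the first coordinate yields the set of squares $\{n^2 \mid n \ge 1\}$, which is not semilinear because the gaps $(n+1)^2 - n^2 = 2n+1$ are unbounded and hence the set is not ultimately periodic. Since a projection of a semilinear set is semilinear, the Parikh image of $L$ cannot be semilinear, and therefore $L \notin \LL(\NCM)$.

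For $L \notin \LL(\NCSA)$ I cannot appeal to semilinearity, since checking stack languages need not be semilinear, so I plan a structural argument exploiting the two-phase nature of the store. Suppose for contradiction that an $\NCSA$ $M$ with state set $Q$ accepts $L$, and fix an accepting computation on $w_n = (a^n\#)^n$ for large $n$. After the writing phase the stack content $\gamma$ is \emph{fixed}, and the reading phase acts as a finite-state device with a one-way input head on the unread suffix and a two-way read-only head on $\gamma$. I would track, for each block boundary in the unread suffix, the pair $(q,p)$ of control state and stack-head position at the instant the input head reaches that boundary. If the same pair recurred at two boundaries, the input blocks strictly between them could be excised: with $\gamma$ unchanged and the device re-entering an identical configuration, the spliced run would accept some $(a^n\#)^{n'}$ with $n' < n$ blocks of length $n$, which is not in $L$. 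The main obstacle is that such a recurrence is not forced by a naive count, because $M$ may write a stack of length proportional to $n^2$, making up to $|Q|\cdot(|\gamma|+2)$ distinct boundary configurations available and thereby (a priori) encoding the block index in the stack-head position. Overcoming this is the heart of the argument: I would use a crossing-sequence analysis on the boundaries between stack cells to show that verifying each block has length exactly $n$ forces the two-way head into excursions that return it to a bounded ``effective'' set of positions at the block boundaries, so it cannot simultaneously serve as a length-measuring ruler and as an unbounded block counter. This collapses the number of distinct boundary configurations to a bound independent of $n$, and then for large $n$ the pigeonhole recurrence applies and yields the contradiction.

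Finally, the strict inclusions follow. A $\DCSA$ is a $\DCSACM(1,1)$ that never uses its counter, so $\LL(\DCSA) \subseteq \LL(\DCSACM(1,1))$; since $\LL(\DCSA) \subseteq \LL(\NCSA)$ and $L \notin \LL(\NCSA)$, we have $L \notin \LL(\DCSA)$, so with $L \in \LL(\DCSACM(1,1))$ the inclusion $\LL(\DCSA) \subsetneq \LL(\DCSACM(1,1))$ is proper. Likewise $\LL(\NCSA) \subseteq \LL(\NCSACM(1,1))$, and $L \in \LL(\DCSACM(1,1)) \subseteq \LL(\NCSACM(1,1))$ while $L \notin \LL(\NCSA)$, giving $\LL(\NCSA) \subsetneq \LL(\NCSACM(1,1))$.
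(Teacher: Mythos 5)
Your witness language and the $\NCM$ half are fine: the Parikh-image/projection argument is just a spelled-out version of the paper's one-line observation that $L$ is not semilinear while $\NCM$ accepts only semilinear languages. The strict-inclusion bookkeeping at the end is also correct. The problem is the $\NCSA$ half. There you do not give a proof; you give a plan whose decisive step is asserted rather than established. The entire difficulty is exactly the obstacle you name yourself --- the stack may have length $\Theta(n^2)$, so the pair (state, stack-head position) at a block boundary ranges over $\Theta(n^2)$ values and no pigeonhole applies --- and your proposed resolution (``a crossing-sequence analysis \ldots{} shows the head cannot simultaneously serve as a length-measuring ruler and as an unbounded block counter'') is precisely the claim that needs proof. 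Nothing in your sketch rules out, a priori, a machine that writes a long stack and uses distinct regions of it to index distinct blocks; showing this is impossible is essentially as hard as proving from scratch that $\{a^{n^2} \mid n\ge 1\}$ is not an $\NCSA$ language, which is a nontrivial theorem of \cite{CheckingStack}. There is also a secondary gap: in an $\NCSA$ the writing phase can consume input, so your excision argument, which operates only on the ``unread suffix'' after the writing phase, says nothing about the blocks read while the stack is still being built.

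The paper avoids all of this with a two-line reduction: $\NCSA$ languages are closed under homomorphism, and the homomorphism erasing $\#$ maps $L$ onto $L' = \{a^{n^2} \mid n \ge 1\}$, which is already known not to be an $\NCSA$ language \cite{CheckingStack}. So if $L$ were in $\LL(\NCSA)$, so would $L'$ be, a contradiction. You should either adopt that reduction or actually carry out the crossing-sequence analysis in full; as written, the $\NCSA$ non-membership is not proven.
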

\begin{proof}
Consider the language 
$L = \{(a^n\#)^n ~|~ n \ge 1\}$ from Example \ref{DCSAwitness}.  
$L$ cannot be accepted by
an $\NCSA$; otherwise, $L' = \{a^{n^2} ~|~ n \ge 1\}$ can also be
accepted by an $\NCSA$ (since $\NCSA$ languages are closed under
homomorphism), but it was shown in \cite{CheckingStack} that
$L'$ cannot be accepted by any $\NCSA$. 
However, Example \ref{DCSAwitness} showed that $L$ can
be accepted by a $\DCSACM(1,1)$. Furthermore, $L$ is not semilinear, but $\NCM$ only accepts semilinear languages \cite{Ibarra1978}.
\qed \end{proof}

We now proceed to show that the membership problem for
$\DCSACM$s is decidable.
In view of Lemma \ref{generallambda}, our problem reduces to deciding,
given a $\DCSACM$ $M$, whether it accepts $\lambda$.
For acceptance of $\lambda$, the next lemma
provides a normal form.

\begin{lemma}\label{alwayswrites}
Let $M$ be a $\DCSACM$. We can effectively construct 
a $\DCSACM$ $M'$ such that:
\begin{itemize}
\item all counters of $M'$ are $1$-reversal-bounded and
each must return to zero before accepting,
\item $M'$ always writes on the stack at 
each step during the writing phase,
\item the stack head returns to the left end of the stack before
accepting,
\end{itemize}
whereby $M'$ accepts $\lambda$ if and only if $M$ accepts $\lambda$.
\end{lemma}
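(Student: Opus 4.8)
The plan is to produce $M'$ from $M$ by three successive modifications, each of which preserves the full accepted language $L(M)$; since $\lambda \in L(M')$ iff $\lambda \in L(M)$ then follows at once, I will not treat the empty input specially. First I would normalize the counters, then force a stack write at every writing-phase step, and finally append a cleanup phase executed just before acceptance.

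For the first item, I would invoke the simulation of an $l$-reversal-bounded counter by $\lceil (l+1)/2 \rceil$ $1$-reversal-bounded counters from \cite{Ibarra1978}, obtaining an equivalent machine all of whose counters reverse at most once. Emptying the counters before acceptance is then handled in the cleanup phase: when the simulated $M$ is about to enter a final state, $M'$ instead runs a loop that repeatedly pops each counter, testing for zero, and only accepts once every counter is empty. As each counter is already $1$-reversal-bounded, these trailing pops lie within its decreasing segment (or form its single reversal), so both $1$-reversal-boundedness and the counter instruction language are respected. The same cleanup phase also issues ${\rm D}$ moves until the stack head reads $Z_b$, so the head sits at the left end of the stack immediately before acceptance; these moves take place in the reading phase and hence respect the checking-stack instruction language.

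The central step --- and the one I expect to be the main obstacle --- is forcing a push at every step of the writing phase. I would fix a fresh dummy symbol $d \in \Gamma_0$ not used by $M$ and have $M'$ push $d$ exactly when $M$ performs a $\stay$ in its writing phase (every other writing transition of a checking stack is already a push). The difficulty is that the inserted $d$'s corrupt whatever the machine reads from the stack, in both phases. During the writing phase I would have $M'$ keep the current logical top-of-stack symbol of $M$ in its finite control, updating it on genuine pushes and leaving it unchanged on dummy pushes, and use this remembered symbol in place of reading the (possibly dummy) physical top. During the reading phase the $d$'s must be made transparent: on entering that phase $M'$ first descends past any trailing dummies to the topmost genuine symbol, and thereafter, whenever a move ${\rm U}$ or ${\rm D}$ lands the head on a $d$, $M'$ records the direction of travel in its state and keeps moving that way until it reads a genuine symbol. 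The markers $Z_b$ and $Z_t$ are never dummies and remain detectable, so the endpoints of each scan are unaffected, and $M'$ thus observes exactly the sequence of genuine symbols that $M$ would read. Combining the three modifications yields the required $M'$; the only genuinely delicate point is this transparent handling of the dummy symbols, which the finite-control bookkeeping above resolves.
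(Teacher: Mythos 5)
Your proposal is correct and follows essentially the same route as the paper's proof: reduce to $1$-reversal-bounded counters via the standard simulation from \cite{Ibarra1978}, add a cleanup phase that empties the counters and returns the stack head before accepting, and push a dummy symbol on every $\stay$ of the writing phase which is then skipped over during the reading phase. The paper's argument is much terser; your additional bookkeeping (remembering the logical top symbol in the finite control during the writing phase, and making dummies transparent in both travel directions during the reading phase) is exactly the detail the paper leaves implicit, and it is handled correctly.
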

\begin{proof} 
It is evident that all counters can be assumed to be $1$-reversal-bounded
as with $\DCM$ \cite{Ibarra1978}, and that each counter can be forced to return
to zero before accepting. Similarly, the checking
stack can be forced to return to the left end before accepting.
We introduce a dummy symbol $\$$ to the stack alphabet
so that if in a step, $M$ does not write on the stack, then $M'$
writes $\$$.  When $M'$ enters the reading phase, $M'$ simulates 
$M$ but ignores (i.e., skips over) the $\$$'s. Then 
$M'$ accepts $\lambda$ if and only if $M$ accepts $\lambda$.
\qed \end{proof}

In view of Lemma \ref{alwayswrites}, we 
may assume that a $\DCSACM$ writes
a symbol at the end of the stack at each step during the
writing phase. This is important for deciding the following problem.
\begin{lemma} \label{infiniteloop}
Let $M$ be a $\DCSACM$ satisfying the assumptions of Lemma
\ref{alwayswrites}.  We can effectively
decide whether or not $M$, on $\lambda$ input,  has an infinite writing
phase (i.e., will keep on writing). 
\end{lemma}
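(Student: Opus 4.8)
The plan is to observe that, on $\lambda$ input, the writing phase of $M$ is really just a reversal-bounded counter computation with finite control, and then to reduce the question to the emptiness problem for $\NCM$, which is decidable by \cite{Ibarra1978}. Since the input is $\lambda$, the one-way input head sits permanently on $\lhd$, so the input contributes nothing. By the normal form of Lemma \ref{alwayswrites}, at every step of the writing phase $M$ pushes exactly one symbol (it never stays or pops), and during the writing phase it can read only the current top of the stack. But the top of the stack is precisely the symbol pushed on the previous step, so it is already determined by the previous transition. Consequently the entire writing-phase behaviour is governed by the finite datum consisting of the current control state together with the symbol last pushed, plus the counter contents; the unbounded stack provides no readable information beyond its top symbol. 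Folding the top-of-stack symbol (initially $Z_b$) into the finite control turns the writing phase into a deterministic computation of a finite-state device equipped only with $M$'s ($1$-reversal-bounded) counters.

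Next I would build an $\NCM$ $A$, reading no input, that deterministically mimics this writing phase. A state of $A$ records $M$'s control state and the current top-of-stack symbol. From such a state, $A$ inspects the zero/nonzero status of its counters (exactly the information $M$'s transition function uses) and consults $M$'s transition: if that transition is a $\push(y)$, then $A$ applies the same counter operation and updates its recorded top-of-stack symbol to $y$; if instead the transition is a reading move (${\rm D}$, ${\rm S}$, or ${\rm U}$) or is undefined, then $A$ enters a final state, because this is exactly the moment at which $M$ leaves the writing phase or halts. Since $M$ is deterministic, the computation of $A$ on $\lambda$ is forced and follows $M$ step for step, and since $M$'s counters are $1$-reversal-bounded, $A$ is a legitimate $\NCM$.

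For correctness: $M$, on $\lambda$ input, has an infinite writing phase if and only if $M$ never applies a reading move and never halts, i.e.\ if and only if $A$ never reaches its final state, i.e.\ if and only if $L(A) = \emptyset$. As emptiness is decidable for $\NCM$ \cite{Ibarra1978}, we can decide the property, which proves the lemma. I expect the only real obstacle to be the first observation: recognizing that the growing checking stack contributes no unbounded readable memory during the writing phase (only its top, which is fixed by the previous push), so that the whole writing phase collapses to a reversal-bounded counter computation. A secondary point to check carefully is that $A$ can detect the precise step at which $M$ exits the writing phase using only its finite control and the zero tests of its counters; this holds because $M$ itself makes that decision from exactly the same information. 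Everything after the collapsing observation is a routine simulation together with the appeal to decidability of $\NCM$ emptiness.
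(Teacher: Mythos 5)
Your proof is correct, but it takes a genuinely different route from the paper's. Your key move is to observe that on $\lambda$ input the writing phase collapses to a deterministic reversal-bounded counter computation: the only readable part of the growing stack is its top symbol, which is determined by the preceding push and can be folded into the finite control. You then reduce ``the writing phase is infinite'' to emptiness of an input-free $\NCM$ $A$ whose final states mark exactly the exit from (or halting of) the writing phase, so that $L(A)=\emptyset$ iff $M$ writes forever. The paper instead constructs an $\NCM$ $M'$ whose \emph{input} is the string written on the stack; $M'$ checks that the input is indeed what $M$ writes and searches for a window of $s+1$ consecutive writing steps during which no counter is decremented and no counter is incremented from zero. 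A pigeonhole argument on the $s$ states shows such a window exists iff the writing phase is infinite, so in the paper it is \emph{non-emptiness} of $L(M')$ that witnesses an infinite writing phase. Both reductions ultimately appeal to decidability of $\NCM$ emptiness; yours dispenses with the combinatorial pumping/window argument by pushing all the work into that decidability result (essentially, halting of a deterministic reversal-bounded counter machine on fixed input is a reachability question), which is arguably cleaner, and it adapts just as readily to the multi-stack setting of Lemma \ref{lem13}. One shared implicit assumption: both constructions need the normal form of Lemma \ref{alwayswrites} to enforce the one-reversal bound structurally in the transition function, so that the (possibly non-accepting) writing-phase computation itself never violates the counters' instruction language; this is the standard normal form and is unproblematic.
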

\begin{proof}   Let $s$ be the number of states of $M$.  We construct 
an $\NCM$ $M'$ which, when given an input $w$ over the stack
alphabet of $M$, does the following: simulates 
the computation of $M$ on `stay' transitions while checking that 
$w$ could be written by $M$ on the stack at some point during the computation of the writing phase of $w$, while also verifying that there is a subword  $x$
of $w$ of length $s +1$ such that $x$ was written by $M$
without:\begin{enumerate}
\item incrementing a counter that has so far been at zero, and
\item decrementing  a non-zero counter.
\end{enumerate}
If so, $M'$ accepts $w$. Next, it will be argued that
$L(M')$ is not empty if and only if $M$ has an infinite
writing phase on $\lambda$, and indeed this is decidable since
emptiness for $\NCM$ is decidable \cite{Ibarra1978}.

If $L(M')$ is not empty, 
then there is a sequence of $s+1$ transitions during the writing phase
where no counter during this sequence is increased from zero,
and no counter is decreased. Thus, there must be some state $q$
hit twice by the pigeonhole principle, and the sequence of transitions
between $q$ and itself must repeat indefinitely in $M$.
Thus, $M$ has an infinite writing
phase on $\lambda$ input.

Conversely, assume $M$ has an infinite writing phase.
Then there must be a sequence of $s+1$ transitions where no
counter is decreased, and no counter is increased from zero.
Thus, $L(M')$ must be non-empty.
\qed \end{proof}

From this, decidability of acceptance of $\lambda$ is straightforward.
\begin{lemma}  It is decidable, given a $\DCSACM$ $M$  
satisfying the assumptions of Lemma \ref{alwayswrites},
whether or not $M$ accepts $\lambda$.
\label{lemmaafternormal}
\end{lemma}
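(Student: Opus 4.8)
The plan is to decide $\lambda$-acceptance by treating the writing phase and the reading phase of $M$ separately, exploiting the normal form of Lemma \ref{alwayswrites} and the decidability of emptiness for $\NCM$ \cite{Ibarra1978}. First I would apply Lemma \ref{infiniteloop} to decide whether $M$ on $\lambda$ has an infinite writing phase. If it does, then $M$ never leaves the writing phase; since during the writing phase the stack head sits at the top of the stack and, by Lemma \ref{alwayswrites}, acceptance requires the head to have returned to the left end, $M$ cannot accept, so the procedure answers ``no'' (the only exception, acceptance in the initial empty-stack configuration, is checked directly).

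If the writing phase is finite, then by Lemma \ref{infiniteloop} the step-by-step simulation of $M$ on $\lambda$ is guaranteed to reach the first move (${\rm D}$, ${\rm S}$, or ${\rm U}$) instruction; running it up to that point terminates and yields the now fixed and finite stack content $z$ (of some length $\ell = |z|$) together with the configuration $C$ from which $M$ enters the reading phase. From $C$ onward the stack is read-only and frozen, so $M$ behaves as a deterministic two-way read-only head on the fixed tape $z$ together with its $1$-reversal-bounded counters.

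It then remains to decide whether this read-only phase, started from $C$, ever reaches an accepting configuration (final state, head back at the left end, all counters zero, as guaranteed by Lemma \ref{alwayswrites}). Here I would reduce to $\NCM$ emptiness. Because $z$ is a single fixed word, I can hardcode it into the finite control of a new machine $N$: the $\ell+2$ possible positions of the two-way stack head and the (known) symbol at each position are kept in the state of $N$, so that $N$, although one-way on its own (empty) input, can faithfully simulate the two-way read-only head of $M$ on $z$, using its own counters to track $M$'s reversal-bounded counters. Let $N$ accept exactly when the simulated $M$ reaches an accepting configuration. Then $N$ is an $\NCM$, and $L(N) \neq \emptyset$ if and only if $M$ accepts $\lambda$ through the reading phase; since emptiness for $\NCM$ is decidable \cite{Ibarra1978}, this settles the question. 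Note that if $M$'s reading phase loops forever, no accepting configuration is reached, so $L(N) = \emptyset$ and we correctly answer ``no''; we never have to detect the loop explicitly.

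The main obstacle I anticipate is the writing-phase analysis, which is precisely where an unbounded store appears: the stack content is not fixed until we know the writing phase halts, so Lemma \ref{infiniteloop} is essential to extract a finite $z$. Once $z$ is fixed, the remaining work is the routine but careful bookkeeping of building $N$ (encoding the fixed stack and the two-way head position in the finite control) and verifying that its counters remain reversal-bounded, so that $N$ is a genuine $\NCM$ and \cite{Ibarra1978} applies.
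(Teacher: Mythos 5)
Your proposal is correct and follows essentially the same route as the paper: use Lemma \ref{infiniteloop} to rule out an infinite writing phase, then simulate $M$ to extract the fixed stack content, encode it in the finite control, and reduce to a decidable problem for reversal-bounded counter machines from \cite{Ibarra1978}. The only cosmetic difference is that you reduce to $\NCM$ emptiness whereas the paper reduces to $\DCM$ membership, which is an equivalent final step.
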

\begin{proof}
From Lemma \ref{infiniteloop}, we can decide if $M$ has an
infinite writing phase.  If so, $M$ will not
accept $\lambda$ (as the stack must return to the bottom before accepting).  

If  $M$ does not have an infinite writing phase,  
the (final) word $w$ written in the stack is unique and 
hence has a unique length $d$.  In this case, we 
can simulate faithfully the computation of $M$ (on $\lambda$ input)
and determine $d$.

We then construct a DCM $M_d$, which on 
$\lambda$ input, encodes the stack in the state and simulates $M$.
Thus, $M_d$ needs a buffer of size $d$ to simulate the operation
of the stack, and $M_d$ accepts if and only if $M$ accepts.
The result follows, since 
the membership problem for $\DCM$ is decidable \cite{Ibarra1978}.
\qed \end{proof}

From Lemmas \ref{generallambda}, \ref{alwayswrites}, and \ref{lemmaafternormal}:
\begin{proposition} \label{prop6}
For $r \geq 1$, the membership problem for $r$-head $2\DCSACM$
is decidable.
\end{proposition}

We now give some undecidability results. The proofs will use the 
following result in \cite{Ibarra1978}:

\begin{proposition} \cite{Ibarra1978} \label{hilbert}
It is undecidable, given a $2\DCM(2)$ $M$  
over a letter-bounded language, whether $L(M)$ is empty.  
\end{proposition}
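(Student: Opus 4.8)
Since this proposition is quoted from \cite{Ibarra1978}, the plan is to recall the reduction that establishes it, namely a reduction from Hilbert's tenth problem: it is undecidable whether a polynomial $P(x_1,\ldots,x_n)$ with integer coefficients has a solution in the non-negative integers. I would reduce an arbitrary such instance to the emptiness problem for a $2\DCM(2)$ over a letter-bounded language, so that undecidability of the latter follows. The overall shape is that the input word will \emph{be} a candidate assignment, and $M$ will accept exactly the assignments that satisfy $P=0$; hence $L(M)\neq\emptyset$ iff $P$ has a solution.

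First I would put $P=0$ into a convenient normal form. By repeatedly introducing auxiliary variables to name subterms, any Diophantine condition is equisatisfiable (over $\natzero$) with a finite system of equations, each of the shape $A=B$, where $A$ and $B$ are sums of monomials of degree at most two with non-negative integer coefficients (so the only products that ever occur are of the form $x_i\cdot x_j$). This is the standard quadratic normal form for Hilbert's tenth problem, and only the two facts ``undecidable'' and ``products of at most two variables suffice'' are needed. I would then encode a candidate assignment $(x_1,\ldots,x_m)$ (now including the auxiliaries) in unary over distinct letters as $w = a_1^{x_1} a_2^{x_2}\cdots a_m^{x_m}$, so that all inputs of interest lie in the letter-bounded set $a_1^*a_2^*\cdots a_m^*$.

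The machine $M$ first checks, in its finite control, that the input has this form, and then verifies the equations of the system one after another, accepting iff every equation holds. The crux is to verify a single equation $A=B$ using only two counters while respecting reversal-boundedness. The key observation is that, because the input is two-way and read-only, all the iteration needed to evaluate monomials can be driven by the input head, so that the counters are only ever incremented during evaluation. Concretely, to add a product $x_i\cdot x_j$ to a counter, $M$ sweeps the block $a_i^{x_i}$, and for each of its symbols it scans the block $a_j^{x_j}$ incrementing the counter; a constant or a linear term is handled similarly with increments only. Thus $M$ accumulates the whole value $A$ into counter $1$ and the whole value $B$ into counter $2$ purely by increments (zero reversals so far), and then tests $A=B$ by decrementing both counters in lockstep, accepting this equation iff they reach zero simultaneously. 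Each equation therefore costs each counter a bounded number of reversals, and since the number of equations is a constant fixed by the (fixed) instance $P$, the resulting $M$ is genuinely reversal-bounded with exactly two counters.

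The step I expect to be the main obstacle is precisely this counter management: making the evaluation of every monomial increment-only, so that only the final equality tests consume reversals, is what allows two reversal-bounded counters to suffice. It relies essentially on two-way input (one-way $\DCM$ and $\NCM$ have decidable emptiness, as noted above) and on having first reduced to products of at most two variables, since a direct product of three or more variables could not be formed without reading---and hence reversing---a counter. Once these points are in place, the correctness of the reduction (that $L(M)\neq\emptyset$ iff the system, equivalently $P=0$, is solvable over $\natzero$) and the constancy of the reversal bound are routine, and undecidability of emptiness for $2\DCM(2)$ over a letter-bounded language follows.
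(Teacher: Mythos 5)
The paper does not prove this proposition; it simply cites \cite{Ibarra1978}, so your reconstruction has to be judged against the known reduction. Your overall architecture (Hilbert's tenth problem, quadratic normal form, candidate assignment written in unary over distinct letters so the language is letter-bounded) is the right one, but the step you yourself flag as the crux --- evaluating every monomial ``purely by increments,'' with ``all the iteration \ldots driven by the input head'' --- does not work, and it is exactly the point where the real construction has to do something different. To add $x_i\cdot x_j$ to a counter you propose that $M$ ``sweeps the block $a_i^{x_i}$, and for each of its symbols it scans the block $a_j^{x_j}$.'' A two-way head with finite control cannot remember which symbol of block $i$ it is currently processing while it walks off to sweep block $j$; and increment-only counters cannot supply that memory, because a counter that is only incremented stays non-zero and hence feeds no information back to the finite control. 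Between decrements and zero-tests the device is therefore just a $2\DFA$, whose halting computations have length linear in the input, so it cannot perform the quadratically many increments that $x_i\cdot x_j$ requires. The claim ``$A$ into counter $1$ and $B$ into counter $2$, zero reversals so far'' is thus unachievable.

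The standard repair (and the one underlying \cite{Ibarra1978}) is to put the products on the tape rather than compute them: extend the letter-bounded input to $a_1^{x_1}\cdots a_m^{x_m}\,b_1^{y_1}\cdots b_\ell^{y_\ell}$ with one block $y$ per monomial, verify each $y = x_i\cdot x_j$ by repeatedly subtracting $x_j$ from $y$ (counter $1$: load $y$, then decrement; one reversal) while counting the number of sweeps of block $j$ in counter $2$ and finally comparing that count to $x_i$ (one more reversal), and then check the remaining, now linear, equations, which is easy with reversal-bounded counters. Alternatively you can keep your increment-the-sum architecture but sacrifice one counter per monomial as a loop counter (load $x_i$, decrement once per sweep of block $j$ while the other counter accumulates); that costs one reversal per monomial, which is still a constant for a fixed instance. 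Either fix preserves letter-boundedness and the constant reversal bound, so the reduction goes through; but as written, your counter management is not realizable by any $2\DCM(2)$.
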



\begin{proposition} The membership problem
for $\NCSACM(2)$ is undecidable.
\label{membershipNCSACM}
\end{proposition}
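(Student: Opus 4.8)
The plan is to reduce the emptiness problem for $2\DCM(2)$ over letter-bounded languages, which is undecidable by Proposition \ref{hilbert}, to the question of whether a constructed $\NCSACM(2)$ accepts $\lambda$. The key observation is that a checking stack, during its reading phase, provides exactly two-way read-only access to whatever word was written during the writing phase; hence a one-way nondeterministic checking stack machine can simulate a two-way input head by first guessing and writing a candidate input word onto the stack. Nondeterminism supplies the guess, and the two reversal-bounded counters of the $\NCSACM(2)$ directly simulate the two counters of the $2\DCM(2)$. This is precisely the mechanism that converts the (decidable) one-way counter setting into the (undecidable) two-way counter setting.

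Concretely, let $M$ be a $2\DCM(2)$ whose input is drawn from a letter-bounded set $a_1^* \cdots a_m^*$. I would construct an $\NCSACM(2)$ $M'$ operating on empty input as follows. In its writing phase, $M'$ stays on the ($\lambda$) input and nondeterministically pushes a word $w \in a_1^* \cdots a_m^*$ onto the checking stack. It then enters the reading phase and simulates the computation of $M$ on $w$: the stack markers $Z_b$ and $Z_t$ play the roles of the two input end-markers $\rhd,\lhd$, the down/stay/up stack instructions move the stack read head to mimic the two-way motion of $M$'s input head, and the two counters of $M'$ are updated exactly as the two counters of $M$. Finally $M'$ accepts $\lambda$ if and only if the simulated run of $M$ on $w$ is accepting. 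Since the simulation copies counter operations verbatim, the counters of $M'$ inherit the reversal bound of $M$, so $M'$ is genuinely an $\NCSACM(2)$.

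By construction, $\lambda \in L(M')$ precisely when there exists some $w$ with $w \in L(M)$, i.e.\ when $L(M) \neq \emptyset$. Thus deciding membership of $\lambda$ in $L(M')$ would decide emptiness of $L(M)$, which is impossible by Proposition \ref{hilbert}. Hence the membership problem for $\NCSACM(2)$ is undecidable. (Equivalently, the same construction shows emptiness for $\NCSACM(2)$ is undecidable, from which undecidability of membership also follows via Proposition \ref{equivalentdecidability}.)

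The step I expect to require the most care is verifying that the reading-phase simulation is faithful. Because the writing phase must terminate before reading begins, the entire guessed input $w$ (delimited by $Z_b$ and $Z_t$) must be laid down on the stack in a single phase, and each of $M$'s left/right input moves must be translated into the corresponding down/up stack instructions while respecting that the stack contents can no longer be altered once reading starts. Once this bookkeeping between $M$'s two-way input head and the checking stack read head is set up correctly, the remaining details are routine.
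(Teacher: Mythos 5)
Your proposal is correct and is essentially identical to the paper's own proof: both reduce from the undecidable emptiness problem for $2\DCM(2)$ over letter-bounded languages (Proposition \ref{hilbert}) by building an $\NCSACM(2)$ that, on $\lambda$ input, nondeterministically writes a candidate input $w$ on the checking stack and then simulates the two-way machine using the stack's read phase and the two counters. Your extra remarks on the end-marker bookkeeping are a reasonable elaboration of details the paper leaves implicit.
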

\begin{proof} Let $M$ be a $2\DCM(2)$ machine over a letter-bounded
language. Construct from $M$
an $\NCSACM$ $M'$ which, on $\lambda$ input (i.e. the input is fixed),  guesses an input  
$w$ to $M$ and  writes it on its stack. Then $M'$ simulates the computation of $M$ by using 
the stack and two reversal-bounded counters and accepts if and
only if $M$ accepts.  Clearly, $M'$ 
accepts $\lambda$ if and only if $L(M)$ is not empty which is undecidable
by Proposition \ref{hilbert}.
\qed \end{proof}

By Propositions \ref{equivalentdecidability} and \ref{membershipNCSACM}, the following is true:
\begin{corollary} \label{dec1}
The emptiness problem for $\DCSACM(2)$ is undecidable.
\end{corollary}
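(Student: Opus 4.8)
The plan is to obtain this as a direct corollary of the general equivalence established in Proposition~\ref{equivalentdecidability}, instantiated at the store types underlying $\DCSACM(2)$ and $\NCSACM(2)$.

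First I would fix the store types to be $\Omega_1$, a checking stack store type, together with $\Omega_2$ and $\Omega_3$, two $l$-reversal-bounded counter store types, where $l$ is the reversal bound of the counters used in the machine $M'$ constructed in the proof of Proposition~\ref{membershipNCSACM}. With these store types fixed, the complete set of one-way nondeterministic $(\Omega_1,\Omega_2,\Omega_3)$-machines is precisely $\NCSACM(2,l)$, and the complete set of one-way deterministic such machines is precisely $\DCSACM(2,l)$. It is worth checking that these really are \emph{complete} sets, i.e.\ that there is no restriction on the available instructions depending on the input position, so that Proposition~\ref{equivalentdecidability} applies to them.

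I would then invoke Proposition~\ref{equivalentdecidability}, whose items (1) and (3) assert that, for these store types, the emptiness problem for the one-way deterministic machines and the membership problem for the one-way nondeterministic machines are equivalently decidable. Since Proposition~\ref{membershipNCSACM} shows that membership is undecidable for $\NCSACM(2)$---and the witnessing machine uses only two counters with the fixed reversal bound $l$, so membership is already undecidable for $\NCSACM(2,l)$---the equivalence forces emptiness for $\DCSACM(2,l)$ to be undecidable as well. As $\DCSACM(2,l)$ is a subclass of $\DCSACM(2)$, decidability of emptiness for $\DCSACM(2)$ would entail decidability for the subclass, and hence emptiness for $\DCSACM(2)$ is undecidable.

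There is essentially no hard step here; the only points requiring care are bookkeeping. One is the completeness check above, which is needed because Proposition~\ref{equivalentdecidability} can fail for incomplete sets. The other is tracking the reversal bound through the determinization argument used to prove the equivalence of items (1) and (2) in Proposition~\ref{equivalentdecidability}: the deterministic machine constructed there replays the instruction sequence of the nondeterministic machine on each store verbatim, so the same instruction language (and hence the same reversal bound $l$) is respected, ensuring the resulting machine is genuinely a $\DCSACM(2,l)$ rather than a $\DCSACM$ with more counters or larger reversal bound.
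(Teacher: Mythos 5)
Your proposal is correct and follows exactly the paper's own route: the paper derives Corollary~\ref{dec1} precisely by combining Proposition~\ref{equivalentdecidability} (equivalence of deterministic emptiness and nondeterministic membership for complete sets of one-way machines over fixed store types) with the undecidability of membership for $\NCSACM(2)$ from Proposition~\ref{membershipNCSACM}. Your additional bookkeeping---checking completeness of the class and that the determinization preserves the instruction languages (hence the reversal bound)---is exactly the care the paper's framework presupposes, so there is nothing to add.
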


Combining together the results thus far demonstrates that $\NCSACM$ is a model where,
\begin{itemize}
\item the deterministic version has a decidable membership problem,
\item the deterministic version has an undecidable emptiness problem,
\item the nondeterministic version has an undecidable membership problem,
\item the nondeterministic version has an undecidable emptiness problem.
\end{itemize}
Moreover, this is the first (to our knowledge) model where these properties hold.

The next restriction serves to contrast this undecidability result.
Consider an $\NCSACM$ where during the reading phase,
the stack head crosses the boundary of any two adjacent cells
on the stack at most  $d$ times for some given $d \ge 1$.  Call this
machine a $d$-crossing $\NCSACM$.
Then we have:

\begin{proposition} \label{crossing}
It is decidable, given a $d$-crossing $\NCSACM$ $M$, 
whether or not $L(M) = \emptyset$.
\end{proposition}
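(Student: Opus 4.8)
The plan is to reduce the emptiness problem for a $d$-crossing $\NCSACM$ $M$ to the emptiness problem for an ordinary $\NCM$, which is decidable by \cite{Ibarra1978}. The central idea is to eliminate the two-way read-only stack using crossing sequences: once $M$ enters the reading phase, the stack content $w = a_1 \cdots a_m$ is fixed, and the reading phase is a two-way traversal of $w$ in which, by hypothesis, the head crosses each inter-cell boundary at most $d$ times. Hence the entire interaction with boundary $i$ is captured by a crossing sequence of length at most $d$, recording the state (together with a bounded amount of counter-status information) each time the head crosses. Since $d$ is fixed, there are only finitely many possible crossing sequences.

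First I would construct an $\NCM$ $M'$ whose own input encodes, read once from left to right, the guessed stack content of $M$ together with a guessed crossing sequence at each boundary. Processing the cells left to right --- which also matches the order in which $M$ pushes them during its writing phase --- $M'$ simultaneously simulates the writing phase of $M$ and checks, cell by cell, that each guessed symbol is consistent with a writing transition and that the crossing sequences on the two sides of each cell are linked by a legal within-cell computation of $M$ reading that symbol. A key simplification is that we are testing emptiness: the one-way input of $M$ may be chosen freely, so $M'$ need not record it at all. Whenever a simulated transition of $M$ reads an input letter, $M'$ simply guesses that letter; the guesses, taken in the time order dictated by the reconstructed computation, automatically spell out some input word witnessing non-emptiness. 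Thus the one-way input imposes no constraint beyond local transition validity.

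The main obstacle is faithfully simulating the reversal-bounded counters of $M$. The counter operations of $M$ take place in temporal order, whereas $M'$ visits the cells in spatial order, so within a single left-to-right pass $M'$ sees each counter's increments, decrements, and zero-tests permuted out of time order, and it must nonetheless guarantee non-negativity and correct zero-tests at the moments they actually occur. To handle this I would first assume, by \cite{Ibarra1978}, that every counter of $M$ is $1$-reversal-bounded, and then exploit the $d$-crossing bound: each cell is visited at most $d$ times during reading, so only a bounded number of counter ``snapshots'' per cell must be related, and the relations among them (the value carried into and out of each crossing, where each counter reverses, and where it is zero) are linear. I would therefore split each counter of $M$ into finitely many counters of $M'$, one group per crossing level, accumulate the contribution of each cell-visit into the appropriate group, and verify the linear consistency and sign conditions across levels; since the number of levels is at most $d$ and each group is driven monotonically, the resulting counters of $M'$ remain reversal-bounded. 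The counter operations of the writing phase, which entirely precede the reading phase, are simulated directly in order as $M'$ builds the stack.

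Putting these pieces together, $M'$ has an accepting computation if and only if $M$ has an accepting computation on some input, i.e.\ $L(M') \neq \emptyset$ if and only if $L(M) \neq \emptyset$. Since emptiness for $\NCM$ is decidable \cite{Ibarra1978}, emptiness for $d$-crossing $\NCSACM$ is decidable as well. I expect the counter linearization across crossing levels to be the delicate part, both in arranging that $M'$'s counters stay reversal-bounded and in correctly placing the zero-tests in time relative to the global order imposed by the crossing sequences.
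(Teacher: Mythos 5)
Your proposal is correct in outline but takes a genuinely different route from the paper. The paper's proof is a two-line reduction: it observes that a $d$-crossing $\NCSACM$ is a special case of a $d$-crossing $\NTMCM$ (a nondeterministic Turing machine with one-way read-only input, a worktape whose head crosses each inter-cell boundary at most $d$ times, and reversal-bounded counters), and then invokes the theorem of \cite{Harju2002278} that emptiness is decidable for such machines. You instead re-derive the needed decidability from scratch by the classical crossing-sequence construction, reducing directly to $\NCM$ emptiness: guess the stack content together with a crossing sequence of length at most $d$ at each boundary, check local consistency cell by cell, dispense with the one-way input by guessing its letters on the fly, and linearize each $1$-reversal-bounded counter across crossing levels. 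This is essentially the technique underlying the cited result, so your argument is sound, but the part you correctly flag as delicate --- recovering the global temporal order of the segments from the local crossing sequences, placing the zero-tests correctly in that order, and keeping the auxiliary counters of $M'$ reversal-bounded --- is exactly the content that the paper avoids having to spell out by citing \cite{Harju2002278}. What your approach buys is self-containedness and an explicit reduction to plain $\NCM$; what the paper's approach buys is brevity and the reuse of a more general decidability theorem. If you keep your version, you should either fill in the counter-linearization details (guessing, per counter, the temporal segment of its first increment, its reversal, and its return to zero, and verifying that increments, decrements, and zero-tests fall in the right temporal intervals while two monotone auxiliary counters check that total increments equal total decrements) or simply cite the finite-crossing result, at which point your proof collapses to the paper's.
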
  
\begin{proof}

Define a $d$-crossing $\NTMCM$ to be an nondeterministic Turing machine 
with a one-way
read-only input tape and a $d$-crossing read/write worktape (i.e.,
the worktape head crosses the boundary between any two
adjacent worktape cells at most $d$ times) augmented with 
reversal-bounded counters. Note that a $d$-crossing
$\NCSACM$ can be simulated by a $d$-crossing $\NTMCM$. 
It was shown in \cite{Harju2002278} that
it is decidable, given a $d$-crossing $\NTMCM$ $M$,
whether $L(M) = \emptyset$. The proposition follows.
\qed \end{proof}

Although we have been unable to resolve the open problem
as to whether the emptiness problem is decidable for both $\NCSACM$
and $\DCSACM$ with one reversal-bounded counter, as with membership
for the nondeterministic version, we show they are all equivalent
to an open problem in the literature.
\begin{proposition}
\label{equivalenttoopen}
The following are equivalent:
\begin{enumerate}
\item the emptiness problem  is decidable for $2\NCM(1)$,
\item the emptiness problem is decidable for $\NCSACM(1)$,
\item the emptiness problem is decidable for $\DCSACM(1)$,
\item the membership problem is decidable for $r$-head $2\NCSACM(1)$,
\item it is decidable if $\lambda$ is accepted by a $\NCSACM(1)$.
\end{enumerate}
\end{proposition}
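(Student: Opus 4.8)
The plan is to establish a cycle of implications that chains together the five statements, using Proposition \ref{equivalentdecidability} as the backbone for the equivalences among statements (2)--(5), and then linking statement (1) separately. Statements (2), (3), (4), and (5) are exactly the specialization of parts (2), (1), (5), and (4) of Proposition \ref{equivalentdecidability} to the store types $(\Omega_{\mathrm{cs}}, \Omega_{c})$, where $\Omega_{\mathrm{cs}}$ is the checking stack store type and $\Omega_c$ is a single $1$-reversal-bounded counter store type. Since the set of all $\NCSACM(1)$ machines (and the $r$-head two-way variants) forms a \emph{complete} set of machines in the sense defined earlier, Proposition \ref{equivalentdecidability} applies directly and yields the mutual equivalence of (2), (3), (4), and (5) with no further work. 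Thus the whole content of the proposition reduces to showing that (1) is equivalent to one of these, say to (2).

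To show (1) implies (2), I would simulate an $\NCSACM(1)$ $M$ by a $2\NCM(1)$ $M'$. The key observation is that the behavior of the checking stack can be absorbed into a two-way input tape: during the writing phase the stack contents are determined by the computation, and during the reading phase the stack is read-only and accessed in two-way mode, which is precisely the capability a two-way input head provides. Concretely, I would have $M'$ treat the (guessed) final stack contents as written on its two-way input, simulate the writing phase by moving right while guessing and verifying the pushed symbols, and then simulate the read-only phase by moving the two-way head up and down the input, all while carrying the single reversal-bounded counter verbatim. Emptiness of $M$ then corresponds to emptiness of $M'$.

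For the converse, (2) implies (1), I would go the other direction: given a $2\NCM(1)$ $N$, build an $\NCSACM(1)$ $N'$ that first copies a guessed input of $N$ onto its checking stack during the writing phase, then enters the reading phase and uses the two-way read-only access to the stack to simulate the two-way input head of $N$, again passing the single counter through unchanged; $N'$ accepts $\lambda$ exactly when $L(N)$ is nonempty, so by the already-established equivalence of (2) and (5) this transfers decidability of emptiness. The main obstacle, and the step requiring the most care, is ensuring that both simulations respect the instruction language of the checking stack store type --- in particular the rigid two-phase structure (all pushes and stays must precede all move instructions), so that the constructed machines genuinely remain $\NCSACM(1)$ and $2\NCM(1)$ machines rather than more general stack machines; verifying that the single counter stays $1$-reversal-bounded throughout both directions is the routine but essential bookkeeping that makes the reduction tight at exactly one counter.
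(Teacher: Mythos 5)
Your proposal is correct and follows essentially the same route as the paper: Proposition \ref{equivalentdecidability} handles the mutual equivalence of (2)--(5), and the two simulations exchanging the checking stack for a two-way input tape (and vice versa) link statement (1) to the rest. The only cosmetic difference is that the paper attaches (1) to (5) (acceptance of $\lambda$) rather than directly to (2), which sidesteps the need to guess the one-way input of the $\NCSACM(1)$ on the fly inside the $2\NCM(1)$ simulation --- a detail your direct (1)$\Rightarrow$(2) construction should state explicitly but which causes no real difficulty.
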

\begin{proof}
The last four properties are equivalent by Proposition
\ref{equivalentdecidability}.

It can be seen that 2) implies 1) because a $\NCSACM(1)$ machine can simulate a $2\NCM(1)$ machine by taking the
input, copying it to the stack, then simulating the $2\NCM(1)$ machine with the two-way stack instead of the two-way input.

Furthermore, it can be seen that 1) implies 5) as follows: given a 
$\NCSACM(1)$ machine $M$, 
assume without loss of generality, that $M$ immediately and
nondeterministically sets the stack and returns to the bottom of
the stack in read-only mode in some special state $q$ before
changing any counter (as it can verify that $M$ would have pushed the
stack contents). Then, build a $2\NCM(1)$ machine $M'$ that on
some input over the stack alphabet, simulates the stack using the input,
and the counter using the counter starting at state $q$. 
Then $L(M')$ is non-empty if and 
only if $\lambda$ is accepted by $M$.
\qed \end{proof}
It is indeed a longstanding open problem as to whether the emptiness problem for $2\NCM(1)$
is decidable \cite{Ibarra1978}. 

Now consider the following three restricted models, with $k$ counters:
For $k \geq 1$, a $\DCSACM(k)$ (or a $\NCSACM(k)$) machine is said to be:
\begin{itemize}
\item {\em no-read/no-counter} if it does not read the checking stack nor use any counter before hitting the right input end-marker,
\item {\em no-read/no-decrease} if it does not read the checking stack nor decrease any counter before hitting the right input end-marker,
\item {\em no-read} if it does not read the checking stack before hitting the right input end-marker.
\end{itemize}
We will consider the families of $\DCSACM(k)$ ($\NCSACM(k)$) machines
satisfying each of these three conditions.

\begin{proposition}
For any $k \geq 1$, every $2\DCM(k)$ machine can be effectively converted to an equivalent no-read/no-decrease $\DCSACM(k)$ machine, and vice-versa.
\label{EquivalenceDCSACMCounter}
\end{proposition}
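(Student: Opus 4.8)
The plan is to exhibit a structural decomposition of a no-read/no-decrease $\DCSACM(k)$ and match it against the behaviour of a $2\DCM(k)$. Observe that such a machine runs in two stages. Before the right end-marker it is purely in its writing phase: it moves one-way over the input, pushes onto the stack, and may only increment counters (no reading of the stack, no decrements). Crucially, because no counter is ever decremented in this stage, the zero/nonzero status of each counter is monotone---it flips at most once, when the counter is first incremented---so the only counter information that can influence the control during writing is a single bit per counter. Hence the writing phase is governed by a deterministic finite-state control reading the input left-to-right, and the final stack content is a deterministic finite-state transduction $\gamma = T(w)$ of the input $w$. After the end-marker the input head is parked at $\lhd$ and the machine enters its reading phase: the stack head moves two-way in read-only mode over $\gamma$ while the counters (reversal-bounded, now possibly decrementing) are updated. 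This second stage is exactly a $2\DCM(k)$ computation whose two-way input is $\gamma$.

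For the direction $2\DCM(k) \Rightarrow$ no-read/no-decrease $\DCSACM(k)$, I would take $T$ to be the identity: given a $2\DCM(k)$ $M$, the constructed $\DCSACM$ $M'$ reads its one-way input and copies it verbatim onto the stack (pushing one symbol per input symbol, delimited by the stack's bottom and top markers), touching no counter in the process, so the no-read/no-decrease restriction is respected. On reaching $\lhd$, $M'$ simulates $M$ directly, using the checking stack in read-only two-way mode as $M$'s two-way input tape and its own $k$ counters for $M$'s counters. Determinism and the reversal bounds carry over verbatim, and $L(M') = L(M)$.

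For the converse I would build a $2\DCM(k)$ $M'$ on input $w$ that simulates a given no-read/no-decrease $\DCSACM(k)$ $M$ in the two stages above. First $M'$ scans $w$ once left-to-right, replaying the finite-state writing-phase control of $M$ and performing exactly the same counter increments, leaving each counter at the value it would hold when $M$ enters its reading phase; since these are all increments, reversal-boundedness is respected. Then $M'$ simulates the reading phase of $M$ on $\gamma = T(w)$: it uses its own two-way input head to represent the position of $M$'s stack head, and to read the stack symbol currently scanned it recomputes the writing-phase control state at the corresponding input position by a free left-to-right rescan of $w$ (this uses only finite control, and input-head reversals are unrestricted for a two-way machine), from which the pushed symbol is determined. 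The reading-phase counter operations of $M$---the only place decrements occur---are mirrored on $M'$'s counters, so the total number of reversals is preserved, and $M'$ accepts iff $M$ does.

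The main obstacle is the book-keeping that links stack positions to input positions in the converse direction, since the writing phase need not push exactly one symbol per input symbol: a run of stay-on-input steps can emit several symbols, or the control can loop. I would resolve this by first normalizing $M$ so that every writing step pushes a symbol (as in Lemma~\ref{alwayswrites}), and by noting that a maximal block of stay-on-input writing steps either halts within a number of steps bounded by the finite writing-phase control---in which case it emits a bounded string that I attach to the preceding input symbol, giving a constant push-rate that $M'$ can navigate with a finite sub-index in its control---or else loops forever, a condition detectable in finite control on which $M'$ simply rejects. With a bounded push-rate the correspondence between stack positions and (input position, sub-index) pairs is finite-state, so the two-way decoding of $\gamma$ from $w$ goes through, completing the simulation.
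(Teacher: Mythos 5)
Your forward direction (copy the input verbatim onto the checking stack, touch no counter, then simulate the $2\DCM(k)$ using the stack as a two-way tape) is correct and is exactly the paper's construction. The converse is where there is a genuine gap. Your decomposition is sound: the writing phase is a deterministic finite-state transduction $\gamma = T(w)$ (your observation that only the zero/nonzero bit of each counter can influence the control, and that it flips monotonically, is right), and your normalization of the push rate via bounded stay-blocks is fine. The problem is the step where $M'$ must read the stack symbol at the current stack position. You let the two-way input head itself represent the stack head and propose to recover the writing-phase control state at that position ``by a free left-to-right rescan of $w$.'' A $2\DCM(k)$ cannot do this: once the head leaves position $i$ to restart from the left end it has no way to recognize position $i$ again, since $i$ cannot be stored in finite control, and remembering it in a counter would cost a reversal on every such rescan --- unaffordable, because the reading phase may sweep the stack back and forth unboundedly often. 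Nor can the writing-phase state be maintained incrementally, because a leftward move requires the \emph{predecessor} state of a one-way deterministic run, which is not determined by the current state.

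What this step actually requires is the closure of two-way deterministic machines (with auxiliary storage) under inverse deterministic gsm mappings, whose proof uses the Hopcroft--Ullman/Sipser backward-simulation technique for retracing a one-way run with a two-way head. The paper sidesteps the difficulty by routing through exactly that closure: it defines a $2{\rm DGSM}$ $T$ with $T(x) = x \# z s$, where $z$ is the stack content and $s$ the state on entering the reading phase, builds a $2\DCM(k)$ $M'$ over inputs of this form (so the stack content is physically present on the tape and no decoding is needed), and invokes the cited fact from \cite{EngelfrietCheckingStack} that $T^{-1}(L(M'))$ is again accepted by a $2\DCM(k)$. Your argument can be repaired by invoking that same closure result (or carrying out the backward-deterministic simulation explicitly), but as written the ``free rescan'' does not go through.
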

\begin{proof}
First, a $2\DCM(k)$ machine $M$ can be simulated by a no-read/no-decrease $\DCSACM(k)$ machine $M'$
that first copies the input to the stack, and simulates the input of $M$ using
the checking stack, while simulating the counters faithfully.
Indeed, the checking stack is not read and counters are not
decreased until $M'$ reads the entire input.

Next we will prove the converse. 
Let $M$ be a no-read/no-decrease $\DCSACM(k)$ machine with input alphabet $\Sigma$ and stack alphabet $\Gamma$.

A two-way deterministic gsm, $2{\rm DGSM}$, is a deterministic
generalized sequential machine with a two-way
input (surrounded by end-markers), accepting states, and output.
It is known that if $L$ is a language accepted by a two-way 
$k$-head deterministic machine augmented with some storage/memory structure  
(such as a pushdown, checking stack, $k$ checking stacks, etc.), then
$T^{-1}(L)= \{ x  \mid  T(x) = y,  y \in L\}$ is also accepted by the same type of machine \cite{EngelfrietCheckingStack}.

Let $T$ be $2{\rm DGSM}$ which, on input $x \in \Sigma^*$, first outputs  $x \#$.
Then it moves to the left end-marker and on the second
sweep of input $x$, simulates $M$ and outputs the string $z$ written on the
stack during the writing phase of $M$. Note that $T$ can successfully do this as $M$ generates the checking
stack contents from left-to-right, and does not read the contents during the writing phase; and because 
the counters of $M$ are not decreased
during the writing phase of $M$, the counters can never empty during the writing phase, thereby affecting
the checking stack contents created.
When $T$ reaches 
its right end-marker, it outputs the state $s$ of $M$ at that time, and 
then $T$ enters an accepting state.  Thus, $T(x) = x \# z s$.

Now construct a $2\DCM(k)$ $M'$ which when given a string $x \#z s$,
reads $x$, and while doing so, $M'$ simulates the writing phase of $M$ on $x$ by only changing the counters
as $M$ would do. Then, $M'$ moves to the right and stores the state $s$ in the finite
control.  Then $M'$ simulates the reading phase of $M$ on string $z$ (which only happens after the end of the input has been reached), starting in state $s$
and the current counter contents, and accepts if and only if
$M$ accepts.  

It is straightforward to see that
$T^{-1}(L(M')) = L$, which can therefore be accepted by a $2\DCM(k)$ machine.
\qed \end{proof}

From this, the following is immediate, since emptiness for
$2\DCM(1)$ is known to be decidable \cite{IbarraJiang}.
\begin{corollary} \label{cor2}
The emptiness problem for no-read/no-decrease $\DCSACM(1)$ is decidable.
\end{corollary}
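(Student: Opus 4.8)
The plan is to derive Corollary \ref{cor2} as a direct consequence of Proposition \ref{EquivalenceDCSACMCounter} together with the known decidability result for $2\DCM(1)$. The key observation is that Proposition \ref{EquivalenceDCSACMCounter} establishes an effective two-way conversion between no-read/no-decrease $\DCSACM(k)$ machines and $2\DCM(k)$ machines that preserves the accepted language. Specializing this to the case $k=1$, every no-read/no-decrease $\DCSACM(1)$ machine $M$ can be effectively converted into an equivalent $2\DCM(1)$ machine $M''$ with $L(M'') = L(M)$.

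First I would invoke Proposition \ref{EquivalenceDCSACMCounter} with $k=1$ to obtain, from the given no-read/no-decrease $\DCSACM(1)$ machine, an effectively constructible $2\DCM(1)$ machine accepting the same language. Since the construction is effective, testing emptiness of the original $\DCSACM(1)$ machine reduces to testing emptiness of the resulting $2\DCM(1)$ machine.

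Then I would appeal to the cited result of \cite{IbarraJiang} that the emptiness problem for $2\DCM(1)$ is decidable. Because $L(M'') = L(M)$, we have $L(M) = \emptyset$ if and only if $L(M'') = \emptyset$, and the latter is decidable. Composing the effective reduction with the decision procedure for $2\DCM(1)$ emptiness yields a decision procedure for emptiness of no-read/no-decrease $\DCSACM(1)$, which is exactly the claim.

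There is essentially no obstacle here, as the corollary is a straightforward composition of two already-established facts: the language-preserving effective equivalence of Proposition \ref{EquivalenceDCSACMCounter} and the external decidability result for $2\DCM(1)$. The only point requiring care is confirming that the conversion of Proposition \ref{EquivalenceDCSACMCounter} is indeed effective (which it is, as stated), so that emptiness is genuinely transferred along a computable reduction rather than merely abstractly equivalent; this is immediate from the constructive nature of the proof of that proposition.
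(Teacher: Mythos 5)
Your proposal is correct and matches the paper's own justification exactly: the paper derives Corollary \ref{cor2} as an immediate consequence of the effective conversion in Proposition \ref{EquivalenceDCSACMCounter} (with $k=1$) combined with the decidability of emptiness for $2\DCM(1)$ from \cite{IbarraJiang}. No further comment is needed.
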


In the first part of the proof of Proposition \ref{EquivalenceDCSACMCounter}, the $\DCSACM(k)$ machine
created from a $2\DCM(k)$ machine was also no-read/no-counter.
Therefore, the following is immediate:
\begin{corollary}
For $k \geq 1$, the family of languages accepted by the following three sets of machines coincide:
\begin{itemize}
\item all no-read/no-decrease $\DCSACM(k)$ machines,
\item all no-read/no-counter $\DCSACM(k)$ machines,
\item $2\DCM(k)$.
\end{itemize}
\end{corollary}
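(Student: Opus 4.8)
The plan is to establish the three-way coincidence by chaining together containments that mostly follow from the already-proven equivalence in Proposition \ref{EquivalenceDCSACMCounter}, together with the sharpening noted in the remark preceding this corollary. The corollary asserts equality of three language families: the no-read/no-decrease $\DCSACM(k)$ machines, the no-read/no-counter $\DCSACM(k)$ machines, and $2\DCM(k)$. My strategy is to prove the circular chain of inclusions
$$\LL(2\DCM(k)) \subseteq \LL(\text{no-read/no-counter } \DCSACM(k)) \subseteq \LL(\text{no-read/no-decrease } \DCSACM(k)) \subseteq \LL(2\DCM(k)),$$
since each link is either an immediate observation or a direct consequence of a result stated above.

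First I would handle the inclusion $\LL(2\DCM(k)) \subseteq \LL(\text{no-read/no-counter } \DCSACM(k))$. This is exactly the content of the first half of the proof of Proposition \ref{EquivalenceDCSACMCounter}: given a $2\DCM(k)$ machine $M$, we constructed a $\DCSACM(k)$ machine $M'$ that copies the input onto the stack and then simulates the two-way input of $M$ using the checking stack, while simulating the counters faithfully. The key point, flagged in the remark immediately after that proposition, is that during the initial input-reading phase this $M'$ neither reads the inside of its stack nor touches any counter at all; all stack-reading and all counter activity happen only after the right end-marker is reached. Hence the constructed $M'$ is in fact no-read/no-counter, which gives this inclusion for free.

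Next, the middle inclusion $\LL(\text{no-read/no-counter } \DCSACM(k)) \subseteq \LL(\text{no-read/no-decrease } \DCSACM(k))$ is purely a matter of comparing the two restrictions: ``does not read the stack nor use any counter before the end-marker'' is a strictly stronger constraint than ``does not read the stack nor decrease any counter before the end-marker.'' A machine that uses no counters at all before the end-marker in particular decreases no counters before the end-marker, so every no-read/no-counter machine is syntactically a no-read/no-decrease machine. No construction is needed; the inclusion holds verbatim at the level of individual machines, hence at the level of language families. Finally, the inclusion $\LL(\text{no-read/no-decrease } \DCSACM(k)) \subseteq \LL(2\DCM(k))$ is precisely the converse direction established in Proposition \ref{EquivalenceDCSACMCounter}, where a no-read/no-decrease $\DCSACM(k)$ machine was converted to an equivalent $2\DCM(k)$ machine via the $2{\rm DGSM}$ inverse-image argument. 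Composing the three inclusions yields equality of all three families.

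I do not anticipate a genuine obstacle here, since the heavy lifting was already done in Proposition \ref{EquivalenceDCSACMCounter}; the corollary is essentially a bookkeeping observation. The one point deserving care is making explicit why the $M'$ from the first half of that proof qualifies as no-read/no-counter and not merely as no-read/no-decrease, but this is settled by the remark already recorded above the corollary, so the proof reduces to citing that remark together with both directions of the preceding proposition.
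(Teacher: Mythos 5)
Your proposal is correct and matches the paper's reasoning exactly: the paper derives this corollary as ``immediate'' from the observation that the $2\DCM(k)$-to-$\DCSACM(k)$ construction in Proposition \ref{EquivalenceDCSACMCounter} already yields a no-read/no-counter machine, which is precisely the chain of inclusions $\LL(2\DCM(k)) \subseteq \LL(\text{no-read/no-counter}) \subseteq \LL(\text{no-read/no-decrease}) \subseteq \LL(2\DCM(k))$ you spell out. Your write-up just makes explicit the trivial middle inclusion that the paper leaves unstated.
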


One particularly interesting corollary of this result is the following:
\begin{corollary} \label{cor4}
\begin{enumerate}
\item The family of languages accepted by no-read/no-decrease (respectively no-read/no-counter)
$\DCSACM(1)$ is effectively closed under union, intersection, and complementation.
\item Containment and equivalence are decidable for languages accepted by no-read/no-decrease $\DCSACM(1)$ machines.
\end{enumerate}
\end{corollary}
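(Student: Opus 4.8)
The plan is to reduce everything to properties of $2\DCM(1)$. By Proposition \ref{EquivalenceDCSACMCounter}, the family of languages accepted by no-read/no-decrease $\DCSACM(1)$ machines coincides effectively with $\LL(2\DCM(1))$, and the same holds for the no-read/no-counter variant. Hence it suffices to show that $\LL(2\DCM(1))$ is effectively closed under union, intersection, and complementation, and then to translate each closure construction back through the (effective) conversions of Proposition \ref{EquivalenceDCSACMCounter}. Since both directions are effective, any closure witness obtained for $2\DCM(1)$ yields effective closure for both $\DCSACM(1)$ families.

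For complementation, I would first invoke the standard fact that a $2\DCM(1)$ machine can be effectively made to halt on every input: once its single reversal-bounded counter has exhausted its reversals, the device behaves like a two-way $\DFA$, whose loops are detectable, so any looping computation can be rerouted to a rejecting one. Given a halting machine, complementation is then obtained simply by interchanging accepting and non-accepting halting configurations.

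The crux is closure under intersection (union being symmetric, or derivable from intersection and complement via De Morgan). The obstacle is that a naive product of two $2\DCM(1)$ machines would require two counters. The key observation is that the input is two-way and read-only, hence re-scannable, and each machine is deterministic and halting; so the two machines can be simulated \emph{sequentially} with a single counter. Concretely, given halting $M_1, M_2$, I would build a machine that runs $M_1$ to completion, records in its finite control whether $M_1$ accepted, then decrements its counter back to zero, returns the input head to the left end-marker, and runs $M_2$ from scratch, accepting iff both accepted (intersection) or iff at least one accepted (union). Only one counter is active at a time, and the total number of reversals is bounded (those of $M_1$, plus one for the reset, plus those of $M_2$), so the result is again a $2\DCM(1)$ machine.

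Finally, decidability of containment and equivalence follows routinely. For languages $L_1, L_2$ in the family, $L_1 \subseteq L_2$ iff $L_1 \cap \overline{L_2} = \emptyset$; by the closure results just established, this intersection is again accepted by a no-read/no-decrease $\DCSACM(1)$ machine that is built effectively, and its emptiness is decidable by Corollary \ref{cor2}. Equivalence is then decided by testing containment in both directions. The step to get right is the sequential single-counter simulation underlying intersection and union; once that is in place, the remainder is bookkeeping together with an appeal to Corollary \ref{cor2}.
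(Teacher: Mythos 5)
Your proposal is correct and takes essentially the same route as the paper: both reduce the statement to the fact that the no-read/no-decrease and no-read/no-counter $\DCSACM(1)$ families coincide effectively with $\LL(2\DCM(1))$ via Proposition \ref{EquivalenceDCSACMCounter}, and then invoke closure and decidability for $2\DCM(1)$. The only difference is that the paper simply cites these $2\DCM(1)$ results from the literature, whereas you additionally sketch their proofs; your sequential single-counter simulation for intersection is exactly the two-way re-scanning trick the paper alludes to when it notes that the usual parallel product (which would need two counters) is avoided.
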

This follows since this family is equal to $2\DCM(1)$, and these results
hold for $2\DCM(1)$ \cite{IbarraJiang}.
Something particularly noteworthy about closure of languages accepted by no-read/no-decrease $2\DCSACM(1)$
under intersection, is that, the proof does not follow the usual approach for one-way machines. Indeed,
it would be usual to simulate two machines in parallel, each requiring its own counter (and checking stack). But here, only one counter
is needed to establish intersection,
by using a result on two-way machines.
Later, we will show that Corollary \ref{cor4}, part 2 also holds
for no-read $\DCSACM(1)s$.

Also, since emptiness is undecidable for $2\DCM(2)$, even over letter-bounded languages \cite{Ibarra1978}, the following is true:
\begin{corollary}
\label{emptinessNoReadNoCounter}
The emptiness problem for languages accepted by no-read/no-counter $\DCSACM(2)$ is undecidable, even over letter-bounded languages.
\end{corollary}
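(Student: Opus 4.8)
The plan is to obtain this as a direct reduction from the emptiness problem for $2\DCM(2)$ over letter-bounded languages, which is undecidable by Proposition \ref{hilbert}. The key ingredient is already in hand: the first part of the proof of Proposition \ref{EquivalenceDCSACMCounter}, together with the remark immediately following it (that the $\DCSACM(k)$ machine constructed there is in fact no-read/no-counter), shows that every $2\DCM(k)$ machine can be effectively converted into an equivalent no-read/no-counter $\DCSACM(k)$ machine. I would simply instantiate this at $k=2$. Note that because we appeal to this explicit, language-preserving construction, we do not need Proposition \ref{equivalentdecidability}, and so we sidestep the fact that the no-read/no-counter family is an incomplete set of machines.

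Concretely, suppose toward a contradiction that emptiness were decidable for the no-read/no-counter $\DCSACM(2)$ family. Given an arbitrary $2\DCM(2)$ machine $M$ over a letter-bounded language, I would apply the construction of Proposition \ref{EquivalenceDCSACMCounter} to obtain, effectively, a no-read/no-counter $\DCSACM(2)$ machine $M'$ with $L(M') = L(M)$: during its writing phase $M'$ copies its input to the stack, reading nothing from the stack and touching no counter before the right end-marker, and it then simulates $M$ on the stack contents in its reading phase. Running the assumed decision procedure on $M'$ would decide whether $L(M') = \emptyset$, hence whether $L(M) = \emptyset$, contradicting Proposition \ref{hilbert}.

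The only remaining points are routine and present no real obstacle. First, the construction is language-preserving and effective, so $L(M')$ is empty exactly when $L(M)$ is. Second, since $L(M') = L(M)$, the machine $M'$ accepts a letter-bounded language precisely when $M$ does; thus the undecidability is inherited already on the restricted, letter-bounded instances, yielding the ``even over letter-bounded languages'' strengthening for free. No further machinery is required, and the statement follows as an immediate corollary of Propositions \ref{hilbert} and \ref{EquivalenceDCSACMCounter}.
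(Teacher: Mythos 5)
Your proposal is correct and follows essentially the same route as the paper: the paper derives this corollary directly from the effective equivalence of no-read/no-counter $\DCSACM(k)$ machines with $2\DCM(k)$ (the first part of Proposition \ref{EquivalenceDCSACMCounter} together with the remark that the constructed machine is no-read/no-counter), combined with Proposition \ref{hilbert}. Your observations that the construction is language-preserving (hence preserves letter-boundedness) and that no appeal to Proposition \ref{equivalentdecidability} is needed are exactly the right points to make explicit.
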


Turning now to the nondeterministic versions, from the first part of Proposition \ref{EquivalenceDCSACMCounter}, it
is immediate that for any $k \geq 1$, every $2\NCM(k)$ can be effectively converted to an equivalent
no-read/no-decrease $\NCSACM(k)$. But, the converse is not true combining together the following two facts:
\begin{proposition}
\label{contrast}
\begin{enumerate}
\item For every $k \geq 1$, the emptiness problem for languages accepted by $2\NCM(k)$ over a unary alphabet is decidable.
\item The emptiness problem for languages accepted by no-read/no-counter (also for no-read/no-decrease) $\NCSACM(2)$ over a unary alphabet is undecidable.
\end{enumerate}
\end{proposition}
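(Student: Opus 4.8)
The two parts are established by quite different means, so I would treat them separately. For part 1, the plan is to show that, for a $2\NCM(k)$ $M$ over the one-letter alphabet $\{a\}$, the length set $N(M) = \{n \mid a^n \in L(M)\}$ is an effectively computable semilinear (equivalently, ultimately periodic) subset of $\natnum$; decidability of emptiness is then immediate, since one simply tests whether the computed set is empty. The crucial feature is that a unary input supplies only a \emph{single} free parameter $n$, so an accepting computation is determined by the bounded-reversal dynamics of the $k$ counters together with the two-way motion of the head over $a^n$. Since each counter is reversal-bounded, its value profile admits a semilinear summary, and the two-way traversals of the head can be folded into crossing-sequence data; combined with the one free parameter $n$, the existence of an accepting computation becomes a Presburger-definable condition on $n$, which yields a semilinear length set. (By contrast, a letter-bounded input supplies several independent parameters and permits Diophantine equations to be encoded, which is precisely why Proposition \ref{hilbert} gives undecidability in that setting.)

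For part 2 the plan is a reduction from the emptiness problem for $2\DCM(2)$ over a letter-bounded language, which is undecidable by Proposition \ref{hilbert}. Given such a machine $M_0$ with letter-bounded input contained in $a_1^* \cdots a_m^*$ and two reversal-bounded counters, I would build a no-read/no-counter $\NCSACM(2)$ $M'$ over the unary alphabet $\{a\}$ as follows. On input $a^n$, $M'$ first consumes the entire input while leaving the stack and counters untouched (staying on the stack and performing no counter operation), so the no-read/no-counter restriction holds vacuously before the right end-marker. After reaching the end-marker, still within its writing phase, $M'$ nondeterministically guesses a word $w \in a_1^* \cdots a_m^*$ and pushes it onto the checking stack. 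It then enters the reading phase and simulates $M_0$ on $w$, using the two-way read-only motion of the checking-stack head to emulate $M_0$'s two-way input head, and using its own two reversal-bounded counters (with the same reversal bound inherited from $M_0$) to emulate the two counters of $M_0$. $M'$ accepts $a^n$ exactly when this simulation accepts.

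Because $M'$ ignores the value of $n$ altogether, $L(M')$ is either empty or all of $a^*$, and $L(M') \neq \emptyset$ if and only if some $w$ is accepted by $M_0$, i.e.\ if and only if $L(M_0) \neq \emptyset$; hence emptiness for no-read/no-counter $\NCSACM(2)$ over a unary alphabet is undecidable. This is essentially the construction of Proposition \ref{membershipNCSACM}, now carried out over a unary input and arranged so that all stack reading and counter use occur strictly after the end-marker (the guessing and pushing of $w$ being legal writing-phase activity even though the input head already rests on the end-marker). Finally, since a machine that uses no counter before the end-marker in particular decreases no counter before the end-marker, every no-read/no-counter machine is also a no-read/no-decrease machine; as emptiness is undecidable already for the former, smaller class, it is a fortiori undecidable for the latter, and the same $M'$ serves as the witness for both.

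The genuine obstacle lies in part 1: one must control the potentially unbounded number of times the two-way head crosses a cell boundary of $a^n$, which is exactly the phenomenon responsible for the undecidability of general two-way counter-machine emptiness. The reversal-boundedness of the counters together with the one-dimensionality of the unary input are the precise features that tame this, and producing the correct semilinear summary of the joint counter-and-head behaviour is the delicate step. Part 2, by contrast, is a direct adaptation of an existing construction and is routine once the bookkeeping for the restriction is made explicit.
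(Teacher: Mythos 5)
Your part 2 is correct and is essentially the paper's own argument: a reduction from emptiness of $2\DCM(2)$ over letter-bounded languages (Proposition \ref{hilbert}), realized by an $\NCSACM(2)$ that ignores its unary input, nondeterministically writes a candidate word $w$ on the checking stack, and then simulates the $2\DCM(2)$ on $w$ in the reading phase using the stack head as the two-way input head and its own two reversal-bounded counters. Your additional bookkeeping --- deferring all stack reading and counter use until after the end-marker, and noting that no-read/no-counter machines are a fortiori no-read/no-decrease --- is sound and only makes explicit what the paper leaves implicit.

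Part 1 is where your proposal has a genuine gap. The paper does not prove this claim; it simply cites it as a known result (of Ibarra and Jiang) on the decidability of emptiness for $2\NCM(k)$ over unary (indeed bounded) languages. You instead propose an original proof via semilinearity of the length set $\{n \mid a^n \in L(M)\}$, but the entire technical content --- showing that acceptance of $a^n$ by a two-way nondeterministic machine with $k$ reversal-bounded counters is a Presburger-definable condition on $n$ --- is precisely the step you defer, as you yourself acknowledge when you call the ``semilinear summary of the joint counter-and-head behaviour'' the delicate step. Note that the standard crossing-sequence technique does not apply off the shelf here: a crossing of a cell boundary carries $k$ unbounded counter values, so crossing sequences are not finite-state objects, and controlling them is the substance of the cited theorem. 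As written, your part 1 is a plausible plan rather than a proof; you should either carry out that analysis in full or, as the paper does, invoke the known decidability result directly.
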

\begin{proof}
The first part  was shown in \cite{IbarraJiang}.
For the second part, it is known that the emptiness problem for $2\DCM(2)$ $M$ (even
over a letter-bounded language) is undecidable by
Proposition \ref{hilbert}.
We construct a no-read/no-counter $\NCSACM(2)$ $M'$ which, on a unary input,
nondeterministically writes some string $w$ on the stack.  Then $M'$
simulates $M$ using $w$.  The result follows since $L(M') = \emptyset$ if and only
if $L(M) = \emptyset$.
\qed \end{proof}

In contrast to part 2 of Proposition \ref{contrast}:
\begin{proposition}
For any $k \geq 1$, the emptiness problem for languages accepted by no-read/no-decrease $\DCSACM(k)$ machines
over a unary alphabet, is decidable.
\end{proposition}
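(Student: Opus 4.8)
The plan is to reduce this problem directly to the decidability of emptiness for unary $2\NCM(k)$ languages, which is part 1 of Proposition \ref{contrast}, by routing through the effective equivalence with $2\DCM(k)$ supplied by Proposition \ref{EquivalenceDCSACMCounter}. The whole argument is a short chaining of these two earlier results, so I expect no genuinely hard step, only one point that must be checked carefully.

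First, given a no-read/no-decrease $\DCSACM(k)$ machine $M$ whose input alphabet $\Sigma$ is unary, I would invoke the converse (``vice-versa'') direction of Proposition \ref{EquivalenceDCSACMCounter} to effectively construct a $2\DCM(k)$ machine $N$ with $L(N) = L(M)$. Since equivalence there means equality of accepted languages, and the construction does not change the original input alphabet, $L(N) = L(M) \subseteq a^*$ is a unary language. Then, observing that every $2\DCM(k)$ machine is in particular a $2\NCM(k)$ machine, $L(M)$ is a unary language accepted by some $2\NCM(k)$ machine. Applying part 1 of Proposition \ref{contrast}, emptiness is decidable for unary $2\NCM(k)$ languages, and hence we can decide whether $L(M) = L(N) = \emptyset$. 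This completes the reduction.

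The one point that must be verified, and the only place where I expect any friction, is that the construction underlying Proposition \ref{EquivalenceDCSACMCounter} really does preserve unariness. In that construction the intermediate $2\DCM(k)$ machine $M'$ reads encodings of the form $x \# z s$, which live over a genuinely non-unary alphabet, so one might worry that the final machine is not unary. However, the language actually accepted is $L = T^{-1}(L(M'))$, obtained via the inverse $2\mathrm{DGSM}$ closure of $2\DCM(k)$ (the Engelfriet-style result cited there), and the input alphabet of the resulting machine is the domain alphabet $\Sigma$ of the transducer $T$, which is precisely our unary $\Sigma$. Thus the final $2\DCM(k)$ machine $N$ is genuinely over a unary alphabet, so Proposition \ref{contrast} part 1 is applicable to it; the remaining claims are immediate.
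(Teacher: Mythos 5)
Your proposal is correct and follows essentially the same route as the paper: convert the no-read/no-decrease $\DCSACM(k)$ to an equivalent $2\DCM(k)$ via Proposition \ref{EquivalenceDCSACMCounter} and then invoke decidability of emptiness for unary $2\NCM(k)$. Your extra paragraph checking that the inverse-$2{\rm DGSM}$ construction keeps the input alphabet unary is a sensible elaboration of a point the paper states without comment, but it does not change the argument.
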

\begin{proof}
If $M$ is a no-read/no-decrease $\DCSACM(k)$ over a unary alphabet, we can effectively construct an
equivalent $2\DCM(k)$ $M$ (over a unary language) from Proposition \ref{EquivalenceDCSACMCounter}. 
The result follows since the emptiness problem for $2\NCM(k)$ over unary languages is decidable \cite{IbarraJiang}.
\qed \end{proof}

Combining these two results yields the following somewhat strange contrast:
\begin{corollary}
\label{emptinessNCSACM}
Over a unary input alphabet and for all $k \geq 2$, 
the emptiness problem for 
no-read/no-counter $\NCSACM(k)s$ is undecidable, but
decidable for no-read/no-counter $\DCSACM(k)s$.
\end{corollary}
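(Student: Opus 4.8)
The plan is to obtain both halves as essentially immediate consequences of the two preceding results, so that no new machine construction is needed; the entire content of the argument is in checking that the restriction classes line up correctly.

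For the undecidable half, I would start from Proposition~\ref{contrast}, part~2, which already gives undecidability of emptiness for no-read/no-counter $\NCSACM(2)$ over a unary alphabet. To lift this to every $k \geq 2$, observe that a no-read/no-counter $\NCSACM(2)$ is literally a no-read/no-counter $\NCSACM(k)$ in which the remaining $k-2$ counters are never touched: leaving a counter idle trivially respects the no-read/no-counter discipline and does not change the accepted language. Hence the family of no-read/no-counter $\NCSACM(2)$ machines embeds effectively into the family of no-read/no-counter $\NCSACM(k)$ machines for each $k \geq 2$, preserving emptiness, and undecidability of emptiness for the smaller family transfers upward to every $k \geq 2$.

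For the decidable half, I would use the containment of the two restriction predicates rather than any fresh argument. A machine that does not \emph{use} any counter before the right input end-marker certainly does not \emph{decrease} any counter before the right input end-marker, so every no-read/no-counter $\DCSACM(k)$ is in particular a no-read/no-decrease $\DCSACM(k)$. The proposition immediately preceding this corollary establishes that emptiness is decidable for no-read/no-decrease $\DCSACM(k)$ over a unary alphabet, for all $k \geq 1$. Since that decision procedure applies to any machine in the larger class, and each no-read/no-counter $\DCSACM(k)$ lies in it, emptiness for no-read/no-counter $\DCSACM(k)$ over a unary alphabet is decidable for all $k \geq 2$ as well.

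The only point requiring care is the bookkeeping of the two restriction predicates: one must confirm that the specialization used in the first half genuinely keeps a $2$-counter machine inside the no-read/no-counter class when regarded as a $k$-counter machine, and that the implication ``no-read/no-counter $\Rightarrow$ no-read/no-decrease'' is exact. Both are immediate from the definitions, so there is no substantive obstacle beyond this alignment of classes.
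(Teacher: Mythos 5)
Your proposal is correct and matches the paper's (implicit) argument exactly: the paper states this corollary as the direct combination of Proposition~\ref{contrast}, part~2 (padding with idle counters to go from $k=2$ to arbitrary $k\geq 2$) and the immediately preceding proposition on no-read/no-decrease $\DCSACM(k)$, using precisely the containment of no-read/no-counter machines in the no-read/no-decrease class. You have simply spelled out the bookkeeping that the paper leaves to the reader.
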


As far as we know, this demonstrates the first known 
example of a family of
one-way acceptors where the nondeterministic
version has an undecidable emptiness problem,
but the deterministic version has a decidable emptiness problem.
This presents an interesting contrast to Proposition \ref{equivalentdecidability}, where it was shown that for
complete sets of automata for any store types, the emptiness problem of the deterministic version is decidable if
and only if it is decidable for the nondeterministic version. However, the set of unary 
no-read/no-counter $\NCSACM(k)$ machines can be seen to not be a {\bf complete} set of machines, as a complete set of machines
contains every possible machine involving a store type. This
includes those machines that read input letters while performing
read instructions on the checking stack. 
And indeed, to prove the equivalence of 1) and 2)
in Proposition \ref{equivalentdecidability}, the deterministic machine
created reads a letter for every transition applied, which can produce machines that are not of the restriction
no-read/no-counter.

\label{pagething}

With only one counter, 
decidability of the emptiness problem for no-read/no-decrease
$\NCSACM(1)$, and for no-read/no-counter $\NCSACM(1)$ can
be shown to be equivalent to all problems
listed in Proposition \ref{equivalenttoopen}.
This is because 2) of Proposition \ref{equivalenttoopen} implies each immediately, and each implies 1) of Proposition
\ref{equivalenttoopen}, as a 
$2\NCM(1)$ machine $M$ can be converted to a no-read/no-decrease, or no-read/no-counter $\NCSACM(1)$
machine where the input is copied to the stack, and then
the $2\NCM(1)$ machine simulated.

Therefore, it is open as to whether the emptiness problem for no-read/no-decrease (or no-read/no-counter) $\NCSACM(1)$ 
is decidable, as this is equivalent to the
emptiness problem for $2\NCM(1)$. 
One might again suspect that
decidability of emptiness for no-read/no-decrease $\DCSACM(1)$ implies
decidability of emptiness for no-read/no-decrease $\NCSACM(1)$ by Proposition \ref{equivalentdecidability}. However, it is again important to note
that Proposition \ref{equivalentdecidability} only applies
to complete sets of machines,
including those machines that read input letters while performing
read instructions on the checking stack, again violating the `no-read/no-decrease' condition.

Even though it is open as to whether the emptiness problem is decidable
for no-read/no-decrease $\NCSACM(1)s$, we have the
following result, which contrasts Corollary \ref{cor4}, part 2:

\begin{proposition}
The universe problem is undecidable for no-read/no-counter
$\NCSACM(1)$s.  (Thus, containment and equivalence are undecidable.)
\end{proposition}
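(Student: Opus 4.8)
The plan is to reduce from the halting problem for deterministic two-counter (Minsky) machines, which is classically undecidable \cite{HU}. Given such a machine $\mathcal{M}$ started on empty counters, I would fix an encoding of its computation histories as strings $C_0 \# C_1 \# \cdots \# C_t$ over a suitable alphabet $\Sigma$, where each $C_i$ encodes a configuration $(q_i, m_i, n_i)$ of $\mathcal{M}$ with the two counter values written in unary. A history is a \emph{valid halting computation} if $C_0$ is the initial configuration, $C_t$ is halting, and each $C_{i+1}$ is the unique successor of $C_i$ under $\mathcal{M}$'s instructions. I would then build a no-read/no-counter $\NCSACM(1)$ machine $M'$ over $\Sigma$ whose language is $\Sigma^*$ minus the set of valid halting computations; since $\mathcal{M}$ is deterministic it has at most one halting computation, so $L(M') = \Sigma^*$ if and only if $\mathcal{M}$ does not halt, which is undecidable. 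This yields undecidability of the universe problem, and equivalence and containment follow at once, because a no-read/no-counter $\NCSACM(1)$ accepting all of $\Sigma^*$ trivially exists, and $\Sigma^* \subseteq L(M')$ is exactly the universe question.

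The machine $M'$ would operate as follows. While reading its input it simply copies each symbol onto the checking stack and performs regular (finite-control) format checks; crucially, it neither reads the interior of the stack nor touches the counter before the right end-marker, so $M'$ is genuinely no-read/no-counter. It accepts nondeterministically by guessing which defect makes the input invalid: either the string is not of the correct shape, or $C_0$ is not initial, or $C_t$ is not halting --- all regular conditions detectable by the finite control during the reading phase --- or else some step $C_i \to C_{i+1}$ is incorrect. For the last case, after crossing the end-marker $M'$ scans the stack and, at each separator, may nondeterministically select the adjacent pair $(C_i, C_{i+1})$ to inspect, and then guesses which part of the successor relation is violated: a wrong next state, a wrong outcome of a zero-test, or an incorrect counter value.

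The only numeric check is verifying a counter-value mismatch, say $m_{i+1} \neq m_i + \delta$ for the expected increment $\delta \in \{-1,0,+1\}$ dictated by the instruction at $q_i$ (symmetrically for $n$). Using its single reversal-bounded counter, $M'$ scans the unary counter field of $C_i$ left to right, incrementing once per symbol (and adjusting by $\delta$ at the boundary), then continues rightward past the remaining fields without touching the counter, and finally decrements once per symbol while scanning the corresponding field of $C_{i+1}$; a mismatch is detected exactly when the counter fails to reach zero precisely at the end of that field. This is a single increasing phase followed by a single decreasing phase, i.e.\ one reversal, so one $1$-reversal-bounded counter suffices, and all of it happens after the end-marker. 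Zero-test consistency, namely whether a field is empty, is a purely finite-control check, so it needs no additional counter use. Because only one defect must be exhibited per accepting run, a single comparison --- hence a single counter --- is enough.

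The main obstacle will be precisely this resource budget: confirming that one $1$-reversal-bounded counter together with a two-way read-only stack can certify \emph{every} way a single computation step can fail, while respecting the no-read/no-counter discipline that defers all stack-reading and all counting to after the end-marker. This is what forces the reduction to come from a two-counter machine, whose configurations are short and whose step relation decomposes into independent unary comparisons and finite-state state/zero-test checks --- each needing at most one up--down sweep of the counter --- rather than from a general Turing machine, where comparing adjacent tape contents would demand position alignment beyond the reach of a single reversal-bounded counter. Once the comparison is seen to fit in one reversal, correctness of the reduction (that $L(M') = \Sigma^*$ iff $\mathcal{M}$ never halts) is routine.
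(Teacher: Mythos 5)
Your proof is correct, but it takes a genuinely different route from the paper's. The paper's argument is a two-line reduction: it cites the known result (Baker and Book, 1974) that the universe problem is already undecidable for one-way nondeterministic $1$-reversal-bounded one-counter automata, and then observes that any such automaton can be simulated by a no-read/no-counter $\NCSACM(1)$ --- essentially by copying the input onto the checking stack and deferring the entire counter-using simulation to a pass over the stack after the end-marker, which is exactly what makes the simulation respect the no-read/no-counter discipline. You instead reduce directly from the halting problem for deterministic two-counter machines via the complement of the (unique) valid halting computation history, and you carry out the defect-checking machine explicitly, including the accounting that shows a single $1$-reversal-bounded counter suffices because each accepting run needs to certify only one defect, hence only one unary comparison between adjacent configurations. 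In effect you are inlining a proof of the cited Baker--Book-style result inside the new model. What your approach buys is self-containedness and an explicit explanation of why the resource budget (one counter, one reversal, all counting after the end-marker) is enough; what the paper's approach buys is brevity and a cleaner separation of concerns, since the hard combinatorics is delegated to the citation and only the (easy) simulation argument is model-specific. One small presentational point: your format/initial/final-configuration checks can be done in the finite control already during the input-reading (writing) phase rather than the reading phase, but placing them in either phase is harmless for the no-read/no-counter condition, which only forbids entering the stack interior and touching the counter before the end-marker.
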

\begin{proof}
It is known that the universe problem for a one-way nondeterministic
1-reversal-bounded one-counter automaton $M$ is undecidable \cite{Baker1974}.  Clearly,
we can construct a  no-read/no-counter $\NCSACM(1)$ $M'$ to simulate $M$.
\qed \end{proof}

In the definition of a no-read/no-decrease $\DCSACM$, we imposed the condition that the counters can only decrement
when the input head reaches the end-marker.  
Consider the weaker condition no-read, i.e., the only requirement
is that the machine can only enter the stack when the input head reaches the end-marker, but there
is no constraint on the reversal-bounded counters. It is
an interesting open question about whether no-read $\DCSACM(k)$ languages are also equivalent to a $2\DCM(k)$ (we conjecture
that they are equivalent). However, the following stronger version of Corollary \ref{cor2} can be proven.
\begin{proposition} \label{prop12}
The emptiness problem is decidable
for no-read $\DCSACM(1)$s.
\end{proposition}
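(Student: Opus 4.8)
The plan is to reduce the emptiness problem for a no-read $\DCSACM(1)$ $M$ to the emptiness problem for a $2\DCM(1)$, which is decidable by \cite{IbarraJiang}. As a preliminary step I would put $M$ into the normal form of Lemma \ref{alwayswrites} (whose construction preserves $L(M)$ as well as the no-read property), so that exactly one stack symbol is pushed at each step of the writing phase; hence, for a halting computation the length of the stack word $z$ equals the number of writing-phase steps. The crucial feature of the no-read restriction is that, before the right end-marker is reached, the transitions of $M$ never consult the interior of the stack: they depend only on the state, the current input symbol, the counter, and at most the top stack symbol, and the latter is determined by what $M$ has already pushed. Thus the writing phase is really a deterministic one-way counter computation that emits $z$ as a transcript, after which the reading phase is an ordinary two-way deterministic one-counter computation on the fixed word $z$.

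In contrast to Proposition \ref{EquivalenceDCSACMCounter}, where a counter-free $2\mathrm{DGSM}$ sufficed because the no-decrease condition made the counter's zero-status a single bit, here the counter may reverse during the writing phase, so the stack word can depend on genuine counter comparisons and cannot be recovered by a transducer that has no counter. To get around this I would encode the entire writing phase directly onto the read-only two-way input tape of the simulating machine. Concretely, I take the input alphabet of the new machine $M'$ to consist of tagged stack symbols: the $i$-th symbol records the stack symbol pushed at step $i$ together with the input action of $M$ at that step (whether the one-way head advanced, and if so over which letter) and the counter instruction applied. Such a tagged word $\hat{z}$ is then a complete transcript of a writing phase together with the input word $w$ that produced it.

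Then $M'$, a $2\DCM(1)$, operates in two stages using its single counter sequentially. In the first stage it scans $\hat{z}$ once from left to right, re-simulating the writing phase of $M$ deterministically: it checks that each tag is consistent with the (deterministic) transition $M$ would take, that the decoded letters of $w$ form a legal one-way reading up to the end-marker, and it mirrors each counter instruction on its own counter. On reaching the end of $\hat{z}$ it therefore holds exactly the state $s$ and counter value $c$ that $M$ has when it reads the right end-marker. In the second stage $M'$ simulates the reading phase of $M$ on $z$ --- which is just the projection of $\hat{z}$ onto its stack component and is available on the two-way input tape --- starting from state $s$ and continuing the counter from $c$, and it accepts if and only if $M$ accepts. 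Since $M$ is deterministic and the tags remove all nondeterminism, $M'$ is deterministic; since it mirrors $M$'s single counter operation for operation, $M'$ uses one counter whose number of reversals is no larger than that of $M$, so $M'$ is genuinely a $2\DCM(1)$. Every word accepted by $M$ yields a tagged transcript accepted by $M'$, and every tagged word accepted by $M'$ decodes to a word accepted by $M$; hence $L(M') = \emptyset$ if and only if $L(M) = \emptyset$, and decidability follows from \cite{IbarraJiang}.

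The step I expect to be the main obstacle is precisely the one that separates this from the no-decrease case: making a single counter do double duty for reconstructing and verifying the stack word and for performing the two-way reading pass. The reduction resolves this by sequentialising the two stages and by offloading the stack word --- with the producing input folded into the tags --- onto the two-way input, so that the counter is never needed simultaneously for counting during the writing phase and for the reading phase. Some care is also required to confirm that the normal form genuinely preserves the no-read property, and that writing phases which fail to terminate simply correspond to no valid finite transcript, so that non-halting inputs are correctly excluded.
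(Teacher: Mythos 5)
Your proposal is correct and follows essentially the same route as the paper's proof: the paper also encodes the writing phase as a word over symbols in bijective correspondence with the writing-phase transitions of $M$, has a $2\DCM(1)$ verify that word by replaying the deterministic transitions (tracking state and counter), and then simulates the reading phase by reinterpreting each transition symbol as the stack letter it writes, reducing to emptiness for $2\DCM(1)$. The only cosmetic difference is that you first normalize via Lemma \ref{alwayswrites} so every step writes, whereas the paper simply skips over transition symbols that do not write to the stack.
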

\begin{proof}
Let $M$ be a no-read $\DCSACM(1)$. 
Let $T = \{t_1, \ldots, t_m\}$ be symbols in bijective correspondence with transitions of $M$ that can occur in the writing phase. Then, build a $2\DCM(1)$ machine $M'$ that, on input $w$ over $T$, reads $w$ while changing states as $w$ does, and changing the counter as the transitions do. Let $q$ be the state where the last transition symbol ends. Then, at the end of the input, $M'$ simulates the reading phase of $M$ starting in $q$ by scanning $w$, and interpreting a letter $t$ of $w$ as being the stack letter written by $t$ in $M$ (while always skipping over a letter $t$ if $t$ does not write to the stack in $M$). Then $L(M')$ is empty if and only if $L(M)$ is empty.
\qed \end{proof}

We can further strengthen Proposition \ref{prop12} somewhat.  Define a restricted
no-read $\NCSACM(1)$ to be a no-read $\NCSACM(1)$ which is only
nondeterministic during the writing phase.  Then the proof of
Proposition \ref{prop12} applies to the following, as the sequence
of transition symbols used in the proof can be simulated deterministically:

\begin{corollary}
The emptiness problem is decidable for languages accepted by
restricted no-read $\NCSACM(1)$ machines.
\end{corollary}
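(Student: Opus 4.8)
The plan is to mirror the proof of Proposition \ref{prop12}, constructing from a restricted no-read $\NCSACM(1)$ machine $M$ an equivalent-for-emptiness $2\DCM(1)$ machine $M'$, and then to invoke the decidability of emptiness for $2\DCM(1)$ \cite{IbarraJiang}. The essential observation is that the nondeterminism of $M$ is confined to the writing phase, and this nondeterminism can be externalized into the input of a deterministic two-way machine, exactly as in the construction relating nondeterministic and deterministic emptiness used in Proposition \ref{equivalentdecidability}.

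Concretely, I would let $T = \{t_1, \ldots, t_m\}$ be symbols in bijective correspondence with the transitions of $M$ that can occur during the writing phase. On an input word $w$ over $T$, the machine $M'$ first sweeps $w$ left-to-right, treating each letter $t$ as a command: it checks that the encoded state changes are consistent (so that $w$ represents a genuine run of the writing phase of $M$), and it applies to its single counter exactly the counter operation the corresponding transition of $M$ performs. Because $M'$ merely follows the transition sequence supplied as input, it operates deterministically even though $M$ is nondeterministic during this phase; the nondeterministic choices of $M$ are precisely the freedom in choosing $w$. Letting $q$ be the state in which $M'$ finishes this sweep, $M'$ then makes two-way passes over the same input $w$, now interpreting each letter $t$ as the stack symbol that $t$ pushes (skipping over those $t$ that do not write), starting from $q$ and the current counter value, and accepting if and only if the encoded run of $M$ is accepting. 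Thus $L(M')$ is empty if and only if $L(M)$ is empty.

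The restricted hypothesis is used in exactly one place, and isolating its role is the main point of the argument. Since $M$ is deterministic during the reading phase, the stack-reading simulation is itself deterministic, so $M'$ can carry it out as a $2\DCM(1)$ rather than a $2\NCM(1)$. Were $M$ permitted nondeterminism during the reading phase, this simulation would require a genuinely nondeterministic two-way counter machine, yielding a $2\NCM(1)$ whose emptiness problem is the longstanding open problem identified in Proposition \ref{equivalenttoopen}; the value of the restricted model is precisely that deterministic reading keeps us within $2\DCM(1)$. The remaining verification, that the single counter of $M$ is faithfully tracked by the single counter of $M'$ across both the input-driven writing sweep and the stack-reading sweeps, is routine, since at no point must $M'$ maintain more than one counter's worth of information.
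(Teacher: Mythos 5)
Your proposal is correct and is essentially the paper's own argument: the paper proves this corollary by observing that the construction of Proposition \ref{prop12} carries over verbatim, since the writing-phase nondeterminism is already externalized into the input word over the transition alphabet $T$, and the restriction (determinism during the reading phase) is exactly what keeps the simulating machine a $2\DCM(1)$. Your identification of where the restricted hypothesis is used matches the paper's reasoning precisely.
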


While we are unable to show that the intersection of
two no-read $\DCSACM(1)$ languages is a no-read $\DCSACM(1)$ 
language, we can prove:

\begin{proposition} \label{new1}
It is decidable,  given  two no-read $\DCSACM(1)$s
$M_1$ and $M_2$, whether $L(M_1) \cap L(M_2) = \emptyset$.
\end{proposition}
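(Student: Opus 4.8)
The plan is to adapt the single-machine construction from the proof of Proposition \ref{prop12} so that it simultaneously verifies $M_1$ and $M_2$, while still producing a \emph{deterministic} machine with only \emph{one} reversal-bounded counter. This determinism and single counter are essential: emptiness is decidable for $2\DCM(1)$ \cite{IbarraJiang} but open for $2\NCM(1)$, and a naive parallel simulation of the two reading phases would use one counter each, giving a $2\DCM(2)$, whose emptiness is undecidable. Concretely, I would build one $2\DCM(1)$ $M'$ over a combined alphabet so that $L(M') \neq \emptyset$ iff some common word $w$ is accepted by both $M_1$ and $M_2$, and then invoke decidability of emptiness for $2\DCM(1)$.

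\emph{The encoding.} The input of $M'$ is a string $s$ organized into one block per input letter of a common word $w = a_1 \cdots a_n$. Block $j$ consists of the letter $a_j$, followed by the sequence of writing-phase steps that $M_1$ performs while its input head scans $a_j$ (the stay-steps at that position together with the advancing step), followed by the analogous writing-phase steps of $M_2$ for $a_j$. As in Proposition \ref{prop12}, each step symbol records the transition applied, so that the push steps of $M_i$ spell out, as a subsequence of $s$, the stack contents $z_i$ that $M_i$ writes on input $w$. The point of the block structure is that both machines read the \emph{same} letter $a_j$ during block $j$, so the requirement that $M_1$ and $M_2$ share a common input is enforced locally by finite-control consistency checks, and no global string-equality test between two separately guessed words is ever needed. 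Well-formedness of $s$ (correct block structure, each step a legal transition of the appropriate machine from its current configuration reading $a_j$, the advancing step last in each machine's portion, and proper entry into the reading phase at the end-marker) is a local condition checkable during the sweeps; together with determinism of $M_1,M_2$ this forces the $M_i$-portion of $s$ to be exactly the unique computation of $M_i$ on $w$.

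\emph{The simulation.} $M'$ processes $s$ in two sequential stages that reuse the single counter. In the first stage it sweeps left to right, simulating the writing phase of $M_1$ (updating state and counter exactly as $M_1$ does, ignoring the $a_j$'s and all $M_2$-steps), reaching the end-of-writing state $q_1$ and counter value $c_1$; it then simulates the reading phase of $M_1$ by scanning $s$ two-way and interpreting exactly the $M_1$ push-positions as the symbols of $z_1$ (skipping everything else), continuing the counter from $c_1$, and checking that $M_1$ accepts. We may assume (by the usual normalization forcing the counter to return to zero before accepting, as in Lemma \ref{alwayswrites}, or simply by having $M'$ decrement to zero at a cost of one reversal) that the counter is zero at the start of the second stage, which then repeats the same procedure for $M_2$ using the $M_2$-steps and the embedded $z_2$. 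Because each stage uses the counter exactly as $M_i$ does (reversal-bounded by $M_i$'s bound) and the stages are sequential, the total number of reversals is bounded, so $M'$ is a genuine $2\DCM(1)$; it is deterministic since $M_1,M_2$ are deterministic and all scanning is deterministic. Finally $M'$ accepts $s$ iff both stages succeed, so $L(M') \neq \emptyset$ iff $L(M_1) \cap L(M_2) \neq \emptyset$, and the result follows from \cite{IbarraJiang}.

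I expect the main obstacle to be exactly the point of getting by with a single deterministic reversal-bounded counter. The two ideas that make this work are (i) placing both stack contents $z_1,z_2$ physically on the one two-way input $s$, so that each reading phase is realized by free two-way tape motion rather than by counting, and (ii) sequentializing the two simulations and clearing the counter in between, so one counter suffices while reversal-boundedness and determinism are both preserved. The only subtlety needing care is that the block encoding really does force a common $w$ with no string comparison, which holds because a single copy of each letter $a_j$ is shared by the consistency checks of both machines.
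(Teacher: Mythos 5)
Your proposal is correct and follows essentially the same route as the paper: both encode the synchronized writing-phase computations of $M_1$ and $M_2$ on a common input word into a single string over a combined alphabet (the paper uses pairs of transitions padded with $\$$ for stay-steps, you use per-letter blocks, which enforce the same local synchronization), and both then build a single $2\DCM(1)$ that sequentially verifies $M_1$'s writing and reading phases and then $M_2$'s, reusing the one counter and the two-way tape as the stack contents, before invoking decidability of emptiness for $2\DCM(1)$.
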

\begin{proof}
Let $M_1$ and $M_2$ be no-read $\DCSACM(1)$ over input alphabet $\Sigma$. Let $T_i$ be symbols in bijective correspondence with transitions of $M_i$ that can occur in the writing phase, for each $i \in\{1,2\}$. Let $T'$ be the set of all pairs of symbols $(r,s)$, where $r$ is a transition of $M_1$, $s$ is a transition of $M_2$, and where both $r$ and $s$ read the same input letter of $\Sigma$. Let $T''$ be all those symbols $(r,\$)$ where $r$ is a transition of $M_1$ that stays on the input, and let $T'''$ be all those symbols $(\$,s)$ where $s$ is a transition of $M_2$ that stays on the input.

Build a $2\DCM(1)$ machine $M'$ operating over alphabet $T' \cup T'' \cup T'''$. On input $w$, $M'$ verifies that the first component changes states as $M_1$ does (skipping over any \$ symbol) and that if a stay transition is read, the next letter has a first component on the same input letter, and changing the counter as $M_1$ does. Let $q$ be the state where the last transition symbol ends. Then, at the end of the input, $M'$ simulates the reading phase of $M_1$ starting in $q$ by scanning $w$, and interpreting a letter $t \neq \$$ in the first component of $w$ as being the stack letter written by $t$ in $M$, and skipping over $\$$ or any $t$ that does not write to the stack. After completion, then $M'$ does the same thing with $M_2$ using the second component. Notice that the alphabet is structured such that a transition of $M_1$ on a letter $a \in \Sigma$ is used exactly when a transition of $M_2$ using $a \in \Sigma$ is used, since $M_1$ and $M_2$ are both no-read (so their entire input is used before the reading phases starts). For example, a word $w = (s_1,r_1) (s_2, \$) (s_3,\$) (\$, r_2) (s_4,r_3)$ implies $s_1$ reads the same input letter in $M_1$ as does $r_1$ in $M_2$, similarly with $s_4$ and $r_3$, $s_2$ and $s_3$ are stay transitions in $M_1$, and $r_2$ is a stay transition in $M_2$.
Hence, $L(M')$ is empty if and only if $L(M_1) \cap L(M_2)$ is empty.
\qed \end{proof}

One can show that no-read  $\DCSACM(1)$ languages
are effectively closed under complementation.  Thus, 
from Proposition \ref{new1}:

\begin{corollary} \label{new2}
The containment and equivalence problems are decidable for
no-read $\DCSACM(1)$s.
\end{corollary}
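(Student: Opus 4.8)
The plan is to reduce both problems to the emptiness-of-intersection test already provided by Proposition \ref{new1}, using effective closure of no-read $\DCSACM(1)$ languages under complementation. Concretely, for two no-read $\DCSACM(1)$s $M_1$ and $M_2$, containment $L(M_1) \subseteq L(M_2)$ holds if and only if $L(M_1) \cap \overline{L(M_2)} = \emptyset$; so if I can effectively build a no-read $\DCSACM(1)$ $\overline{M_2}$ accepting the complement of $L(M_2)$, then this condition is decidable by Proposition \ref{new1}. Equivalence then follows immediately, since $L(M_1) = L(M_2)$ if and only if $L(M_1) \subseteq L(M_2)$ and $L(M_2) \subseteq L(M_1)$, i.e.\ if and only if both $L(M_1) \cap \overline{L(M_2)} = \emptyset$ and $L(M_2) \cap \overline{L(M_1)} = \emptyset$. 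Thus the entire content of the corollary rests on establishing effective closure under complementation, which I would isolate and prove as a separate lemma first.

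To complement a deterministic machine one swaps the accepting and non-accepting outcomes, so the real task is to put a no-read $\DCSACM(1)$ $M$ into a normal form in which it halts on every input; only then is ``not accepting'' the same as ``halting in a non-final configuration.'' There are exactly two sources of non-halting to eliminate. The first is an infinite loop during the writing phase, i.e.\ an unbounded sequence of stay moves on the input that never reaches the right end-marker. The second is an infinite loop during the reading phase, where $M$ traverses the (now fixed) stack back and forth with its single reversal-bounded counter.

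Writing-phase loops are easy to detect. During the writing phase the checking stack is write-only and is never consulted, so each transition depends only on the current state, the scanned input symbol, and whether the counter is zero. Normalizing the counter to be $1$-reversal-bounded as in Lemma \ref{alwayswrites}, a maximal run of stay moves passes through only finitely many distinct (state, counter-sign) pairs; once the counter has made its last sign change, a repeated state forces a genuine loop, so if the input head fails to advance within more than $|Q|$ consecutive stay steps the computation is looping and can be rerouted to a rejecting halt. The harder part is the reading phase: there the counter value is unbounded, so the configuration space is infinite and loop detection is not merely finite-state. Here the machine is precisely a two-way deterministic one-reversal-bounded one-counter device reading the fixed stack word as a two-way read-only tape, so I would appeal to the fact that $2\DCM(1)$ is effectively closed under complementation \cite{IbarraJiang}; the halting normal form and loop detection implicit in that closure can be transplanted onto the reading phase, forcing it to halt and letting me flip its accept/reject decision. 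Because these modifications touch only the finite control and never introduce any stack read before the right end-marker, the resulting machine $\overline{M}$ is again a no-read $\DCSACM(1)$ accepting $\overline{L(M)}$.

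The main obstacle, as indicated above, is the reading-phase halting argument with an unbounded counter; everything else is bookkeeping in the finite control. The cleanest route is to phrase the reading phase as a $2\DCM(1)$ computation over the fixed stack contents and invoke the complementation closure of $2\DCM(1)$ from \cite{IbarraJiang} rather than redoing loop detection by hand, and then to verify carefully that transplanting this normal form preserves both determinism and the no-read restriction. With complementation in hand, the corollary is immediate from Proposition \ref{new1} via the two reductions described above.
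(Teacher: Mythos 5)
Your top-level argument is exactly the paper's: the paper likewise derives Corollary \ref{new2} from Proposition \ref{new1} together with effective closure of no-read $\DCSACM(1)$ languages under complementation (a closure the paper asserts without proof), and the reduction of containment and equivalence to emptiness of intersection with a complement is correct.

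The one genuine problem is in the complementation lemma you sketch, specifically the writing-phase loop detection. You claim that if the input head fails to advance within more than $|Q|$ consecutive stay steps, the computation is looping and can be rerouted to a rejecting halt. That is false: the no-read restriction constrains only the stack, not the counter, so during a stay-run the counter may be decremented, and a maximal stay-run can legitimately last on the order of the current counter value --- which is unbounded --- before the counter reaches zero, the zero/non-zero test changes, and the head finally advances. Cutting the run off at $|Q|$ steps would make $\overline{M}$ reject words that $M$ accepts. A repeated state within a stay-run forces a genuine infinite loop only when the net counter effect of the cycle is non-negative while the counter stays positive (or the counter is pinned at zero); if the cycle strictly decreases the counter, the run terminates after finitely many iterations. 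The repair is standard but must be done: record in the finite control the first repeated state of a stay-run together with the sign of the net counter change over that cycle, and reroute to a rejecting halt only in the non-terminating case; alternatively, handle writing-phase divergence by the same kind of $\NCM$-emptiness test used in Lemma \ref{infiniteloop}. Your treatment of the reading phase by viewing it as a $2\DCM(1)$ computation over the fixed stack word and importing the halting normal form behind the complementation closure of $2\DCM(1)$ \cite{IbarraJiang} is the right idea, though you should note explicitly that the reading phase begins with a nonzero counter value inherited from the writing phase, so the loop-detection argument must be (and is) independent of the initial counter contents.
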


No-read $\DCSACM(1)$ is indeed quite a large family for which emptiness, equality, and containment are decidable.
The proof of Proposition \ref{new1} also applies to the following:

\begin{proposition} \label{new3}
It is decidable,  given  two restricted no-read $\NCSACM(1)$s
$M_1$ and $M_2$, whether $L(M_1) \cap L(M_2) = \emptyset$.
\end{proposition}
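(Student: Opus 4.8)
The plan is to reuse, essentially verbatim, the construction from the proof of Proposition \ref{new1}, and to pinpoint exactly where the restriction to writing-phase nondeterminism is needed. Recall that in that construction one takes $M_1$ and $M_2$ over a common input alphabet $\Sigma$, forms the alphabet $T' \cup T'' \cup T'''$ of pairs of writing-phase transition symbols (padded with $\$$ on stay moves), and builds a single two-way deterministic machine $M'$ with one reversal-bounded counter whose input $w$ simultaneously encodes a writing-phase transition sequence of $M_1$ (first component) and of $M_2$ (second component) reading a common word of $\Sigma^*$. I would carry the identical construction over here.

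First I would check that a single input word $w$ faithfully represents a pair of computations even under nondeterminism. Since each $M_i$ is nondeterministic \emph{only} during its writing phase, every nondeterministic choice of $M_i$ occurs before the right end-marker is reached and is therefore a choice of which writing-phase transition to apply; each such choice is recorded as one component of a letter of $w$. Hence a given input $w$ to $M'$ is in bijection with a fixed pair of (possibly nondeterministic) writing-phase runs of $M_1$ and $M_2$ on a common $\Sigma$-word, and letting $w$ range over all inputs ranges over all such pairs of runs. Next I would verify that $M'$ may still be taken to be a $2\DCM(1)$: as in Proposition \ref{new1}, $M'$ scans $w$ left to right following the first-component transitions (skipping $\$$) and updating its counter exactly as $M_1$ would in its writing phase, checking the state and input-letter consistency conditions; this is deterministic because $w$ already fixes the transitions. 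At the right end-marker $M'$ simulates the reading phase of $M_1$ by scanning $w$ two-way, interpreting each non-$\$$ first component as the stack symbol that transition wrote; because $M_1$ is deterministic in the reading phase, this is deterministic. After $M_1$ accepts, $M'$ resets its counter and repeats the whole process for $M_2$ using the second components. The single counter is thus reused sequentially, and since each simulation contributes only boundedly many reversals, $M'$ remains a deterministic two-way one-counter machine with a reversal-bounded counter.

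Finally, as in Proposition \ref{new1}, the alphabet is arranged so that a transition of $M_1$ on $a \in \Sigma$ is paired only with a transition of $M_2$ on the same $a$; because both machines are no-read, they consume their entire input before entering the reading phase, so the two encoded computations are forced onto one identical input word. Consequently $L(M')$ is nonempty if and only if some word of $\Sigma^*$ admits accepting runs of both $M_1$ and $M_2$, i.e. if and only if $L(M_1) \cap L(M_2) \neq \emptyset$. Since emptiness is decidable for $2\DCM(1)$ \cite{IbarraJiang}, the proposition follows.

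I expect the main obstacle to be justifying that $M'$ can be kept deterministic. The restriction ``nondeterministic only in the writing phase'' is precisely what makes this possible: writing-phase nondeterminism is absorbed into the quantification over input words $w$, while a nondeterministic reading phase would instead force $M'$ to guess during its two-way sweep over $w$, turning it into a $2\NCM(1)$ — for which the emptiness problem is the long-standing open problem noted after Proposition \ref{equivalenttoopen}, and so the argument would break down.
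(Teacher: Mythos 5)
Your proposal is correct and matches the paper's own treatment: the paper proves Proposition \ref{new3} simply by observing that the construction of Proposition \ref{new1} carries over, exactly because the writing-phase nondeterminism of each $M_i$ is absorbed into the choice of the input word $w$ over the transition alphabet, leaving $M'$ a $2\DCM(1)$. Your added remark pinpointing that a nondeterministic reading phase would instead yield a $2\NCM(1)$ (whose emptiness is open) is a correct and useful observation consistent with the paper's discussion.
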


Finally, consider the general model $\DCSACM(1)$ (i.e., unrestricted).
While it is open whether no-read $\DCSACM(1)$ is equivalent to $2\DCM(1)$,
we can prove:

\begin{proposition} \label{new4}
$\LL(2\DCM(1)) \subsetneq \LL(\DCSACM(1))$.
\end{proposition}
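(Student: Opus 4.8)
The plan is to prove the two inclusions separately. For $\LL(2\DCM(1)) \subseteq \LL(\DCSACM(1))$ I would simply invoke Proposition~\ref{EquivalenceDCSACMCounter}: every $2\DCM(1)$ machine is effectively equivalent to a no-read/no-decrease $\DCSACM(1)$ machine, and such a machine is in particular a $\DCSACM(1)$. Hence this direction is immediate and requires no new work. Everything then rests on producing a single witness language separating the two classes.

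For strictness I would take $L = \{a^i b^j c^k \mid i,j \ge 1,\ k = i\cdot j\}$ from Example~\ref{2DCM1witness}, which is already shown there to be accepted by a $\DCSACM(1)$ with one $1$-reversal-bounded counter: it copies $a^i$ onto the stack, loads $j$ into the counter, and then verifies $k = ij$ by moving up and down the stack while decrementing as it consumes the $c$'s. The essential point is that this verification \emph{reads the stack while input is still being consumed}, an ability that a no-read machine (equivalently, by Proposition~\ref{EquivalenceDCSACMCounter}, a $2\DCM(1)$) does not have. So the whole problem reduces to establishing $L \notin \LL(2\DCM(1))$.

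The genuine content is therefore this non-acceptance claim, and my main approach would go through semilinearity. Since $L \subseteq a^* b^* c^*$ is bounded, I would appeal to the structural fact that a $2\DCM(1)$ restricted to a bounded input language accepts a set whose Parikh image $\{(i,j,k) \in \natnum^3 \mid a^i b^j c^k \in L(M)\}$ is semilinear. Granting this, it suffices to note that $\{(i,j,ij) \mid i,j \ge 1\}$ is not semilinear: a semilinear set is a finite union of linear sets, and along any linear set the three coordinates are all affine-linear in the multiplier parameters, so a period vector that increases both the first and second coordinates would force the third coordinate to grow quadratically in those parameters, contradicting linearity; a short case analysis then rules out covering every $(i,j)$ by finitely many such linear sets. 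This contradiction yields $L \notin \LL(2\DCM(1))$ and completes the proper inclusion.

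The step I expect to be the real obstacle is justifying semilinearity of $2\DCM(1)$ over bounded inputs, since this is not among the results proved earlier in the excerpt; I would either cite the corresponding known result for two-way one reversal-bounded counter machines, or, lacking a clean citation, replace it by a direct crossing-sequence argument. In the latter, I would assume a hypothetical $2\DCM(1)$ machine $M$ accepting $L$ and analyze an accepting computation on $a^i b^j c^{ij}$ via its crossing sequences at the two block boundaries together with the (reversal-bounded) counter profile, arguing that a finite control plus a single reversal-bounded counter and a read-only two-way head cannot correlate the unbounded quantity $k$ with the product $ij$: verifying $k = ij$ requires counting a number of sweeps proportional to one factor while the lone counter is already committed to the other. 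Formalizing this correlation/pumping bound so that it rules out \emph{all} such machines, rather than the obvious strategies, is where the care is needed.
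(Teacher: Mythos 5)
Your easy inclusion and your choice of witness language $L = \{a^i b^j c^k \mid i,j \ge 1,\ k = i\cdot j\}$ match the paper exactly; the problem is the step you yourself flag as the obstacle, and there the proposal has a genuine gap: the structural fact you want to appeal to is \emph{false}. A $2\DCM(1)$ can accept non-semilinear bounded languages. For example, the machine that loads $k$ into its counter while reading $b^k$, then sweeps back and forth over the block $a^i$ decrementing once per cell, and accepts iff the counter empties exactly at a block boundary, accepts $\{a^i b^k \mid i \ge 1,\ i \text{ divides } k\}$ with a single $1$-reversal-bounded counter (only the counter is reversal-bounded; the input head may reverse unboundedly often). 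The divisibility relation is not Presburger-definable, so this bounded language is not semilinear. Hence ``bounded $2\DCM(1)$ languages are semilinear'' cannot be cited or proved, and your main route collapses. Your fallback crossing-sequence argument inherits the same difficulty: because the input head of a $2\DCM(1)$ makes unboundedly many reversals, the crossing sequences at the block boundaries are of unbounded length, so the standard finite-crossing-sequence/pumping machinery does not apply, and the intuition that ``one counter cannot count sweeps'' is exactly the part that resists direct formalization.

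The paper sidesteps this entirely by citing Gurari and Ibarra: the proof that $L \notin \LL(2\DCM(1))$ is \emph{indirect} --- if $L$ were accepted by a $2\DCM(1)$, then the known decidability of the emptiness problem for $2\DCM(1)$s could be leveraged to decide Hilbert's Tenth Problem, a contradiction. Some such non-elementary argument appears to be necessary here precisely because the direct structural approaches (semilinearity, crossing sequences) fail for this model; if you want a self-contained proof you would need to reconstruct that reduction rather than a pumping or Parikh-image argument.
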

\begin{proof}
It is obvious that any $2\DCM(1)$ can be simulated by 
a $\DCSACM(1)$ (in fact by a no-read/no-counter $\DCSACM(1)$).
Now let $L = \{a^i b^j c^k ~|~ i, j \ge 1, k = i \cdot j \}$.
We can construct a $\DCSACM(1)$ $M$ to accept $L$ by Example 
\ref{2DCM1witness}.
However, it was shown in
\cite{Gurari1982} that $L$ cannot be accepted by a $2\DCM(1)$
by a proof that shows that if $L$ can be accepted by a $2\DCM(1)$,
then one can use the decidability of the emptiness problem
for $2\DCM(1)$s to show that Hilbert's Tenth Problem is decidable.
\qed \end{proof}

\section{Multiple Checking-Stacks with Reversal-Bounded Counters}

In this section, we will study
deterministic and nondeterministic $k$-checking-stack
machines. These are defined by using 
multiple checking stack stores.
Implied from this definition is that each stack has
a ``writing phase'' followed by a ``reading phase'', but
these phases are independent of each letter for each stack.

A $k$-stack $\DCSA$ ($\NCSA$ respectively) is the deterministic (nondeterministic) version of this type of
machine. The two-way versions (with input end-markers)
are called $k$-stack $2\DCSA$ and $k$-stack $2\NCSA$, respectively.
These $k$-stack models can also be augmented with reversal-bounded
counters and are called $k$-stack $\DCSACM$, $k$-stack $\NCSACM$,
$k$-stack $2\DCSACM$, and $k$-stack $2\NCSACM$. 

Consider a $k$-stack $\DCSACM$ $M$.
By Lemma \ref{generallambda}, for the membership problem, we need only
investigate whether $\lambda$ is accepted. 
Also, as in Lemma \ref{alwayswrites},
we may assume that each stack pushes a symbol at each move
during its writing phase, and that all 
counters are $1$-reversal-bounded.

We say that $M$ has an infinite writing phase (on $\lambda$ input)
if no stack enters a reading phase. Thus, all stacks will keep
on writing a symbol at each step.  If $M$ has a finite writing phase,
then directly before a first such stack enters its reading phase,
all the stacks would have written strings of the same length.

\begin{lemma} \label{lem13}
Let $k \ge 1$ and $M$ be a $(k+1)$-stack $\DCSACM$ $M$
satisfying the assumption of Lemma \ref{alwayswrites}.
\begin{enumerate}
\item
We can determine if $M$ has an infinite writing phase.
If so, $M$ does not accept $\lambda$. 
\item
If $M$ has a finite writing phase, we can construct
a $k$-stack $\DCSACM$ $M''$ satisfying the assumption of Lemma \ref{alwayswrites} such that
$M''$ accepts $\lambda$ if and only if $M$ accepts 
$\lambda$.
\end{enumerate}
\end{lemma}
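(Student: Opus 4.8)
The plan is to treat the two parts as, respectively, a direct generalization of the loop-detection argument of Lemma~\ref{infiniteloop} and the inductive stack-elimination step that drives the overall decidability proof, the latter resting on the fixed-input reduction of Lemma~\ref{generallambda}.

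For part 1, I would adapt the construction of Lemma~\ref{infiniteloop} to the multi-stack setting. Since $M$ is deterministic on $\lambda$ input and, by the normal form of Lemma~\ref{alwayswrites}, every stack pushes a symbol at each step of its writing phase, during the joint writing phase all $k+1$ stacks grow in lockstep, and the only stack information feeding back into the control is the tuple of current top symbols, each of which is simply the last symbol that stack wrote. I would therefore record this tuple of top symbols together with the control state in an enlarged finite control, so that the writing-phase dynamics become a deterministic map on the finite set of triples (control state, top-symbol tuple, counter zero-status tuple). As in Lemma~\ref{infiniteloop}, I would build an $\NCM$ $M'$ that simulates this dynamics on stay transitions and accepts iff there is a window of $s'+1$ consecutive writing steps (where $s'$ bounds the number of such enlarged states) in which no counter is incremented from zero and no counter is decremented. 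Such a window forces some enlarged state to recur with the counter zero-statuses preserved, which makes the computation eventually periodic and the writing phase infinite; conversely, an infinite writing phase must eventually avoid all such bad counter events, since each $1$-reversal-bounded counter performs only finitely many increments-from-zero and decrements, so it produces such a window. Emptiness of $M'$ is decidable \cite{Ibarra1978}, which gives the decision procedure; and if the writing phase is infinite then no stack ever enters a reading phase, so $M$ never returns its stacks to the left end nor its counters to zero and hence cannot accept $\lambda$.

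For part 2, the naive idea of merging the first-finishing stack into another stack fails: while that stack is being read the target stack may still be writing, and the checking-stack discipline forbids reading inside a stack during its writing phase; and if instead the target has already entered its reading phase, one would need two independent heads on a single stack. The structural fact I would exploit is the one recorded just before the statement: directly before the first stack---say stack~$1$---enters its reading phase, all $k+1$ stacks have written strings of the same length $d$, so stack~$1$'s final content $w_1 \in \Gamma^{d}$ is fixed at that moment. Because we are in the finite-writing-phase case, I can effectively simulate $M$ on $\lambda$ step by step until this first reading-phase entry occurs and read off $w_1$ explicitly. I would then build a $k$-stack $2\DCSACM$ $N$, with a two-way read-only input, that simulates $M$ while representing stack~$1$ by its input tape and stacks $2,\ldots,k+1$ by its $k$ checking stacks: during the writing phase $N$ scans its input left to right, using the $i$-th input symbol as the top symbol read from stack~$1$ at the corresponding step (the lockstep, always-write normal form keeps this in sync, up to the usual one-step offset between pushing and reading the top) while faithfully writing stacks $2,\ldots,k+1$ and updating counters; once the input is exhausted, which coincides with stack~$1$ entering its reading phase, $N$ uses two-way moves on its input to simulate stack~$1$'s read-only reading phase, the end-markers playing the role of $Z_b$ and $Z_t$, and continues the writing and reading phases of its own $k$ stacks. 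By construction $N$ accepts $w_1$ iff $M$ accepts $\lambda$, and $N$ is deterministic and respects the checking-stack discipline on each of its $k$ stacks.

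Finally, since $w_1$ is a single, explicitly computed word, I would apply Lemma~\ref{generallambda} to $N$ with input $w_1$ to obtain a one-way, one-head $k$-stack $\DCSACM$ $N_{w_1}$ that accepts $\lambda$ iff $N$ accepts $w_1$, hence iff $M$ accepts $\lambda$, and that is deterministic because $N$ is; re-applying Lemma~\ref{alwayswrites} puts it into the required normal form and yields $M''$. I expect the main obstacle to be part 2, specifically the realization that one cannot eliminate a stack by an in-place merge and must instead off-load the first-finishing stack's fixed content onto a two-way input; the equal-length property and the computability of $w_1$ are precisely what make this off-loading, and the subsequent appeal to Lemma~\ref{generallambda}, legitimate.
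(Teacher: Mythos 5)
Your proposal is correct and follows essentially the same route as the paper: part~1 is the same $\NCM$-based detection of an $(s+1)$-step window with no counter incremented from zero and none decremented, and part~2 is the same reduction that computes the fixed content of the first stack to finish writing and absorbs it into the finite control to drop the stack count by one. Your detour through a two-way-input machine $N$ followed by Lemma~\ref{generallambda} is just an explicit packaging of what the paper does directly (encoding stack $S_i$ in a buffer of size $d$ in the state), so the two arguments coincide in substance.
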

\begin{proof}
Let $M$ have $s$ states and stack alphabets
$\Gamma_1, \ldots, \Gamma_{k+1}$ for the $k+1$ stacks.
Let $\Gamma = \{[a_1, \ldots,$ $a_{k+1}] ~|~ a_i \in \Gamma_i, 1 \leq i \leq k+1 \}$.
By assumption, each stack of $M$ writes a symbol during its
writing phase.  

We can determine if $M$ has a finite writing phase
as follows: As 
in Lemma \ref{infiniteloop}, we construct 
an $\NCM$ $M'$ which, when given an input
$w \in \Gamma^*$, does the following: simulates the computation
of $M$ on `stay' transitions such that
the input $w$ was written by $M$ (in a component-wise fashion on each checking stack) and there is a subword  $x$
of $w$ of length $s+1$ such that the subword was written by $M$
without:\begin{enumerate}
\item incrementing a counter that has so far been at zero, and
\item decrementing  a non-zero counter.
\end{enumerate}
If so, $M'$ accepts $w$.
So we need only check if $L(M')$ is not empty, which is decidable
since emptiness is decidable for $\NCM$ \cite{Ibarra1978}. Then,
$M$ does not accept $\lambda$ if and only if
$M$ has an infinite writing phase, and if and only if
$L(M')$ is not empty, which is decidable.

If $L(M')$ is empty, we then simulate $M$ faithfully
to determine the unique word $w \in \Gamma^*$ and its
length $d$ just before the reading phase of at least one
of the stacks, say $S_i$, is  entered.
Note that by construction, no stack entered its
stack earlier.

We then construct a $k$-stack $\DCSACM$ $M''$
which, on 
$\lambda$ input, encodes the operation of stack $S_i$
in the state and simulates $M$ (also converted into satisfying the assumptions of Lemma \ref{alwayswrites}).
Thus, $M''$ needs a buffer of size $d$ to simulate the operation
of stack $S_i$.  $M''$ accepts if and only if $M$ accepts,
and has one less stack than $M$.
\qed \end{proof}

Notice that $M''$ has fewer stacks than $M$. Then, from Proposition \ref{prop6} (the result for a single stack) 
and using Lemma \ref{lem13} recursively:
\begin{proposition}  The membership problem for $k$-stack
$\DCSACM$s is decidable.
\end{proposition}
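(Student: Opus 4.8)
The plan is to reduce the membership problem to acceptance of $\lambda$ and then induct on the number of stacks, peeling off one stack at a time via Lemma \ref{lem13} until the single-stack case of Proposition \ref{prop6} applies. First I would invoke Lemma \ref{generallambda} to replace the question ``does $M$ accept $w$?'' by ``does $M_w$ accept $\lambda$?'', where $M_w$ is again a $k$-stack $\DCSACM$; this is legitimate because Lemma \ref{generallambda} is stated for arbitrary store types $(\Omega_1, \ldots, \Omega_k)$, so the case of several checking-stack stores is covered, and $M_w$ stays deterministic. Then I would put $M_w$ into the normal form described at the start of this section (the multi-stack analogue of Lemma \ref{alwayswrites}): every stack writes a symbol at each step of its writing phase, all counters are $1$-reversal-bounded and return to zero, and each stack head returns to its left end before accepting.

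The induction is on the number of stacks. The base case is a single stack, where acceptance of $\lambda$ is decidable by Proposition \ref{prop6} with $r = 1$. For the inductive step, suppose the claim holds for $k$-stack machines, and let $M$ be a $(k+1)$-stack $\DCSACM$ in normal form. By part~1 of Lemma \ref{lem13} I can first decide whether $M$ has an infinite writing phase; if it does, then $M$ cannot accept $\lambda$ (no stack head ever returns to its left end), so I answer ``no''. Otherwise $M$ has a finite writing phase, and part~2 of Lemma \ref{lem13} effectively constructs a $k$-stack $\DCSACM$ $M''$, again satisfying the normal-form assumptions, with $M''$ accepting $\lambda$ if and only if $M$ accepts $\lambda$. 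By the inductive hypothesis, acceptance of $\lambda$ by $M''$ is decidable, which settles the $(k+1)$-stack case.

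The point that makes the recursion go through is precisely that Lemma \ref{lem13} returns a machine $M''$ that still obeys the normal-form assumptions and has exactly one fewer stack, so the same lemma can be reapplied; after at most $k$ applications the stack count drops to one and Proposition \ref{prop6} finishes the argument. I do not expect a genuine obstacle here, since all the substantive work has already been done: deciding the infinite-writing-phase question through an auxiliary $\NCM$ (Lemma \ref{infiniteloop}), encoding a single stack's bounded final contents into the finite control (Lemma \ref{lemmaafternormal}), and the stack-reduction construction (Lemma \ref{lem13}). The only thing to check is that each reduction is effective and preserves the invariants, which it is by construction.
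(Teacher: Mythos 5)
Your proposal is correct and follows essentially the same route as the paper: reduce membership to acceptance of $\lambda$ via Lemma \ref{generallambda}, normalize as in Lemma \ref{alwayswrites}, and then apply Lemma \ref{lem13} recursively to strip off one stack at a time until Proposition \ref{prop6} handles the single-stack base case. The paper states this recursion in a single sentence; your write-up simply makes the induction and the preservation of the normal-form invariants explicit.
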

Then, by Lemma \ref{generallambda}:
\begin{corollary} The membership problem for $r$-head $k$-stack $2\DCSACM$
is decidable.
\label{fulldecidable}
\end{corollary}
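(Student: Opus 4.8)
The plan is to reduce the general two-way, multi-head membership question to the one-way, single-head acceptance question that has already been settled, using Lemma \ref{generallambda} as the bridge.

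First I would fix an $r$-head $k$-stack $2\DCSACM$ $M$ together with an arbitrary input word $w \in \Sigma^*$; the task is to decide whether $w \in L(M)$. The key observation is that $M$ is an $(\Omega_1, \ldots, \Omega_m)$-machine whose store types $\Omega_1, \ldots, \Omega_m$ are precisely the $k$ checking stacks of $M$ together with its finitely many reversal-bounded counters. Since Lemma \ref{generallambda} applies to \emph{any} one- or two-way, $r$-head $(\Omega_1, \ldots, \Omega_m)$-machine, I would invoke it on $M$ and $w$ to obtain a machine $M_w$ over the same store types that is one-way and $1$-head, and that accepts $\lambda$ if and only if $M$ accepts $w$. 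Because $M$ is deterministic, the furthermore clause of Lemma \ref{generallambda} guarantees that $M_w$ is deterministic as well. Hence $M_w$ is a (one-way, $1$-head) $k$-stack $\DCSACM$.

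The decision then reduces to testing whether $M_w$ accepts $\lambda$, which is an instance of the membership problem for $k$-stack $\DCSACM$s. That problem has already been shown decidable in the preceding proposition, which was obtained from Proposition \ref{prop6} (the single-stack base case) together with Lemma \ref{lem13} (the stack-elimination step) applied recursively. Running that decision procedure on $M_w$ and the empty word therefore decides whether $w \in L(M)$, which establishes the corollary.

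I expect essentially no obstacle here: all of the real work is concentrated in the base case of Proposition \ref{prop6} and in the recursive step of Lemma \ref{lem13}, both of which are already available. The only point requiring care is that the reduction of Lemma \ref{generallambda} preserves both the collection of store types and the determinism of the machine, so that $M_w$ genuinely lands inside the class of one-way $k$-stack $\DCSACM$s for which membership is known to be decidable; this holds by the construction in that lemma, where the input $w$ and the $r$ heads are absorbed into the finite control and the stores are left untouched.
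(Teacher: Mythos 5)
Your proposal is correct and matches the paper's argument exactly: the paper also derives this corollary from the preceding proposition (decidability of membership for one-way $k$-stack $\DCSACM$s) via Lemma \ref{generallambda}, which collapses the two-way $r$-head input into the finite control while preserving the store types and determinism. No gaps.
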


This is one of the most general machine models known with a decidable membership problem. Although space complexity classes of Turing machines are also very general, the membership problem for both deterministic and nondeterministic Turing machines satisfying some space complexity function are both decidable. However, for $\NCSACM$s, membership is undecidable but is decidable for deterministic machines. Moreover, unlike space-bounded Turing machines, $r$-head  $k$-stack $2\DCSACM$s
do not have a space restriction on their stacks.

\section{Conclusions}
We introduced several variants of checking stack automata and showed
the difference between the deterministic and nondeterministic models
with respect to the decidability of the membership and emptiness
problems. The main decision problems are summarized in Table \ref{summary}. We believe the contrasting results obtained are the first
of its kind.  An interesting open question is the status of the emptiness
problem for nondeterministic checking stack automata  augmented
with one reversal-bounded counter which can only read the stack
and decrease the counter at the end of the input. As shown in the
paper,  this problem is equivalent to a
long-standing open problem of
whether emptiness for two-way nondeterministic finite automata
augmented with one reversal-bounded counter is decidable.
Furthermore, we investigated possible scenarios that can occur when augmenting 
a machine model accepting non-semilinear languages with reversal-bounded counters. This contrasts known results on models accepting only semilinear languages.

\begin{table}\begin{center}
\begin{tabular}{|l|c|c||c|c|c|}
\hline
\textbf{nondeterministic class} & membership  & emptiness &\textbf{deterministic class} & membership & emptiness  
\\\hline 
 \hline 
 $\NCSACM(1)$  & $?$ & $?$ & \DCSACM(1) & $\checkmark$ & $?$ \\ 
	& Prop \ref{equivalenttoopen} & Prop \ref{equivalenttoopen} & & Prop \ref{prop6} & Prop \ref{equivalenttoopen}  \\\hline
$\NCSACM(k), k \geq 2$ & $\times$ & $\times$ & $\DCSACM(k),k \geq 2$ & $\checkmark$ & $\times$  \\
		& Prop \ref{membershipNCSACM} & Prop \ref{membershipNCSACM} & & Prop \ref{prop6} & Cor \ref{dec1}  \\\hline
$r$-head $k$-stack & $\times$ & $\times$ & $r$-head $k$-stack & $\checkmark$ & $\times$   \\
$2\NCSACM$ & Prop \ref{membershipNCSACM} & Prop \ref{membershipNCSACM} & $2\DCSACM$  &  Cor \ref{fulldecidable} & Cor \ref{dec1} \\\hline
no-read/no-decrease  &	$?$ & $?$ & no-read/no-decrease & $\checkmark$ & $\checkmark$ \\
$\NCSACM(1)$ & pg \pageref{pagething} & pg \pageref{pagething} & $\DCSACM(1)$ & Prop \ref{prop6} & Cor \ref{cor2}  \\\hline
no-read/no-counter  &	$\times$ & $\times$ & no-read/no-counter & $\checkmark$ & $\times$ \\
$\NCSACM(2)$ & Prop \ref{contrast} & Prop \ref{contrast} & $\DCSACM(2)$ & Prop \ref{prop6} & Cor \ref{emptinessNoReadNoCounter}  \\\hline
\end{tabular} \end{center}
\caption{Summary of decision problem results, with the left half being nondeterministic machines, and the right half being the corresponding deterministic machines. The decision problems are listed on the columns. A $\checkmark$ is placed when the problem is decidable for the class of machines, a $\times$ is placed when the property is undecidable, and a $?$ is placed when the problem is still open. In all cases where it is open, the decidability is equivalent to the open problem of whether emptiness is decidable for $2\NCM(1)$. The theorem proving each is listed.}
\label{summary}
\end{table}

\section*{Acknowledgements}
We thank the Editor and the referees for their expeditious
handling of our paper and, in particular, the referees for their
comments that improved the presentation of our results.

\bibliography{bounded}{}
\bibliographystyle{elsarticle-num}

\end{document}